\newtheorem{prop}{Proposition}
\title{Principal component-guided sparse \\reduced-rank regression}
\author{
 \href{https://orcid.org/0009-0006-4724-5058}{\includegraphics[scale=0.06]{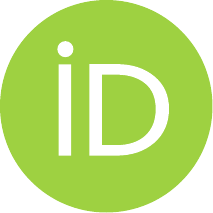}\hspace{1mm}Kanji Goto}\\
  Graduate School of Culture and Information Science, \\Doshisha University, Japan\\
\texttt{g.kanji510.510@gmail.com} \\
  %% examples of more authors
   \And
   \href{https://orcid.org/0000-0002-9377-1212}{\includegraphics[scale=0.06]{orcid.pdf}\hspace{1mm}Shintaro Yuki}\\
  Department of Computational and Systems Biology, \\Division of Biological Data Science, \\Medical Research Laboratory, \\Institute for Integrated Research\\
Institute of Science Tokyo, Japan\\
  \And
 \href{https://orcid.org/0000-0001-9621-5871}{\includegraphics[scale=0.06]{orcid.pdf}\hspace{1mm}Kensuke Tanioka} \\
	Department of Biomedical Sciences and Informatics\\
	Doshisha University\\
	Kyoto, Japan\\
 \And
 \href{https://orcid.org/0000-0002-1408-2655}{\includegraphics[scale=0.06]{orcid.pdf}\hspace{1mm}Hiroshi Yadohisa}\\
 Department of Culture and Information Science\\
  Doshisha University, Japan\\
  %% \AND
  %% Coauthor \\
  %% Affiliation \\
  %% Address \\
  %% \texttt{email} \\
  %% \And
  %% Coauthor \\
  %% Affiliation \\
  %% Address \\
  %% \texttt{email} \\
  %% \And
  %% Coauthor \\
  %% Affiliation \\
  %% Address \\
  %% \texttt{email} \\
}
\begin{document}
\maketitle
\begin{abstract}
Reduced-rank regression estimates regression coefficients by imposing a low-rank constraint on the matrix of regression coefficients, thereby accounting for correlations among response variables. To further improve predictive accuracy and model interpretability, several regularized reduced-rank regression methods have been proposed. However, these existing methods cannot bias the regression coefficients toward the leading principal component directions while accounting for the correlation structure among explanatory variables. In addition, when the explanatory variables exhibit a group structure, the correlation structure within each group cannot be adequately incorporated. To overcome these limitations, we propose a new method that introduces pcLasso into the reduced-rank regression framework. 
The proposed method improves predictive accuracy by accounting for the correlation among response variables while strongly biasing the matrix of regression coefficients toward principal component directions with large variance. Furthermore, even in settings where the explanatory variables possess a group structure, the proposed method is capable of explicitly incorporating this structure into the estimation process. Finally, we illustrate the effectiveness of the proposed method through numerical simulations and real data application.
\end{abstract}

% keywords can be removed
%\keywords{First keyword \and Second keyword \and More}

\section{Introduction}
\quad In recent years, research across various fields has focused not only on analyzing the relationship between explanatory variables and a single response variable, but also on simultaneously analyzing multiple response variables to clarify their relationships. 
For example, in gene expression studies, researchers use measurements of patients' genetic variations to simultaneously predict multiple phenotypic indicators such as gene expression levels and disease status, thereby enhancing understanding of disease mechanisms and biological processes \citep{liu2022robust}.
A regression model that simultaneously predicts multiple response variables using such explanatory variables is referred to as a multivariate linear regression model \citep{izenman2008modern}. 
However, estimation of regression coefficients in multivariate linear regression model is equivalent to fitting separate univariate linear regression models for each response variable, and thus fails to capture the correlation structure among response variables \citep{hilafu2020sparse}. 
In many real datasets, including gene expression data, the response variables are often correlated \citep{chen2012sparse}. 
If such correlations are not properly accounted for, the model tends to overfit each response variable individually, leading to increased variance in the estimated regression coefficients and consequently to a decline in predictive accuracy.
\par{}
\quad To address this issue, reduced-rank regression (RRR) \citep{anderson1951estimating,izenman1975reduced} has been proposed. 
RRR imposes a low-rank constraint on the matrix of regression coefficients, allowing the variation among response variables to be explained by common latent factors, thereby capturing correlations among response variables through latent variables \citep{izenman2008modern}. 
However, when the number of explanatory variables is large or when there is strong multicollinearity among explanatory variables, the estimation accuracy of regression coefficients can deteriorate \citep{chen2012sparse,mukherjee2011reduced}. 
To overcome this problem and improve both predictive accuracy and model interpretability, regularized  RRR approaches such as sparse reduced-rank regression (SRRR) \citep{chen2012sparse} and reduced-rank regression with elastic net penalty (ERRR) \citep{kobak2021sparse} have been proposed. 
ERRR is a method that introduces a regularization term based on the elastic net \citep{zou2005regularization} to enable sparsity in the matrix of regression coefficients while accounting for the correlation structure among response variables and addressing multicollinearity among explanatory variables. 
ERRR incorporates an elastic net penalty to handle multicollinearity and enforce sparsity in the matrix of regression coefficients while accounting for correlations among response variables. 
By treating each row of the matrix of regression coefficients as a group and applying a Group Lasso \citep{yuan2006model} penalty along the row direction, we perform variable selection for explanatory variables that contribute to the predictive accuracy of the response variables.
Additionally, a penalty based on the $\ell_2$-norm of individual regression coefficients shrinks coefficients along major principal component directions corresponding to large variance in the explanatory variables, improving the stability of coefficient estimation and thus enhancing predictive accuracy.
\par{}
\quad However, this approach has two limitations. 
The first limitation is that the regression coefficients cannot be biased in a way that accounts for the correlation structure among the explanatory variables. 
As a result, the principal components that contribute to predicting the response variables cannot be adequately captured, and improvements in predictive accuracy cannot be expected. 
In particular, when the explanatory variables are highly correlated, uniformly shrinking the regression coefficients fails to sufficiently emphasize the principal component directions with large variance that reflect the correlation structure among the explanatory variables. 
Consequently, the importance of the regression coefficients becomes unclear, leading to no improvement in predictive accuracy \citep{tay2021principal}. 
The second limitation arises when explanatory variables have a group structure, as the method cannot account for within-group correlations. 
For instance, in gene expression data, genes belonging to the same functional group tend to be influenced by common genetic variations and show correlation within the group \citep{chen2010graph}. 
Ignoring such group structures in estimating regression coefficients can reduce predictive accuracy \citep{ogutu2014regularized}.
\par{}
\quad Within the framework of univariate linear regression model, Principal Components Lasso (pcLasso) \citep{tay2021principal} has been proposed as a regularization method to address these two limitations. 
pcLasso strongly biases regression coefficients toward high-variance principal components, shrinking coefficients corresponding to low-variance components and clarifying coefficient importance, thereby improving predictive accuracy. 
Moreover, pcLasso utilizes both group membership and within-group correlation information to estimate regression coefficients, further enhancing predictive accuracy. 
Specifically, principal component analysis is performed within each group to extract components reflecting within-group correlation, and regression coefficients are biased along these principal component directions. 
This approach leverages group structure to improve predictive accuracy and enables group-wise variable selection.
\par{}
\quad In this study, we propose a new method, referred to as principal component-guided sparse reduced-rank regression, which incorporates pcLasso into the RRR framework.
By imposing a low-rank constraint on the matrix of regression coefficients, the proposed method enables estimation that accounts for the correlation structure among response variables.
In addition, by strongly biasing the matrix of regression coefficients toward principal component directions with large variance, the proposed method is expected to improve predictive accuracy.
Furthermore, the proposed method enables estimation of the matrix of regression coefficients while accounting for group structures among explanatory variables. By estimating regression coefficients for each group along principal component directions that reflect within-group correlation information, further improvements in predictive accuracy are expected.
\par{}
\quad This paper demonstrates the usefulness of the proposed method through numerical simulations and real data application. 
The remainder of this paper is organized as follows. 
In Section 2, we describe the proposed model, objective function, update formulas, and estimation algorithm. 
In Section 3, we perform numerical simulations of the proposed method. 
In Section 4, we describe the application of the proposed method to the real data.
We conclude the paper and discusses future research directions in Section 5.

\section{Principal component-guided sparse reduced-rank regression model}
\quad In this Section, we first introduce the notation used to describe the proposed method and the model formulation of RRR, which forms the basis of the proposed approach. Next, we present the objective function of the proposed method. Finally, we describe the update equations and the estimation algorithm used to solve the objective function of the proposed method.

\subsection{Model formula}
\quad Here, we describe the notation and the RRR model formulation introduced in the proposed method.
Let the response matrix be $\bm{Y}=(\bm{y}_1,\bm{y}_2,\dots,\bm{y}_n)^\mathsf{T}\in \mathbb{R}^{n \times q}$, and the explanatory matrix be $\bm{X}=(\bm{x}_1,\bm{x}_2,\dots,\bm{x}_n)^\mathsf{T}\in \mathbb{R}^{n \times p}$. 
We assume a linear relationship between the response and explanatory  variables and consider the following multivariate linear regression model.
\begin{align}\label{多変量回帰モデル}
    \bm{Y}=\bm{XB}+\bm{E},
\end{align}
where $\bm{B}=(\bm{b}_1,\bm{b}_2,\dots,\bm{b}_p)^\mathsf{T}\in \mathbb{R}^{p\times q}$ 
is the matrix of regression coefficients, and $\bm{E}=(\bm{\varepsilon}_1,\bm{\varepsilon}_2,\dots,\bm{\varepsilon}_n)^\mathsf{T}\in \mathbb{R}^{n\times q}$ is the error matrix. 
In addition, we impose the following low-rank constraint on the matrix of regression coefficients $\bm{B}$, 
\begin{align}\label{ランク制約}
    rank(\bm{B})=r,\ \ \ r \leq {\rm{min}}(p,q).
\end{align}
Thus, in RRR, it is assumed that a latent low-dimensional structure exists by imposing a rank constraint on the matrix of regression coefficients $\bm{B}$.
Under this assumption, the matrix of regression coefficients $\bm{B}$ can be expressed as the product of two latent variable matrices of rank $r$ \citep{reinsel1998multivariate}, that is,
\begin{align*}
    \bm{B}=\bm{CD}^\mathsf{T},
\end{align*}
where $\bm{C}=(\bm{c}_1,\bm{c}_2,\dots,\bm{c}_p)^\mathsf{T}\in \mathbb{R}^{p \times r}$ denotes the latent variable matrix associated with the explanatory  matrix $\bm{X}$, and $\bm{D}=(\bm{d}_1,\bm{d}_2,\dots,\bm{d}_q)^\mathsf{T}\in \mathbb{R}^{q \times r}$ denotes the latent variable matrix associated with the response matrix $\bm{Y}$. Moreover,the latent variable matrix $\bm{D}$ is a column-orthonormal matrix satisfying the constraint $\bm{D}^\mathsf{T}\bm{D} = \bm{I}_r$, where $\bm{I}_r \in \mathbb{R}^{r \times r}$ is the identity matrix. 

In this case, by transforming the multivariate linear regression model in  Eq.$\eqref{多変量回帰モデル}$, the model for RRR is given by Eq.$\eqref{縮小ランク回帰}$,
\begin{equation}\label{縮小ランク回帰}
    \bm{Y}=\bm{XC}\bm{D}^\mathsf{T}+\bm{E}.
\end{equation}
%このモデルの旨み(目的変数の相関を考慮できる)
\quad RRR estimates the matrix of regression coefficients by taking into account the correlations among the response variables through the low-rank constraint on the matrix of regression coefficients imposed by Eq.$\eqref{ランク制約}$ \citep{izenman2008modern}.
The linear combinations of the explanatory variables given by the $r$-dimensional reduced variables $\bm{XC}$ can be regarded as common latent factors. 
Since $\bm{XC}$ affects all response variables, the variability of the responses can be explained by these common latent factors. 
That is, the low-rank constraint enables the correlations among the response variables to be captured via the latent variables.

\subsection{Objective function}
\quad Here, we describe the objective function of the proposed method.
The objective function of the proposed method estimates $\bm{C}$ and $\bm{D}$ by solving a minimization problem consisting of a loss function based on the Frobenius norm and a regularization term. 
We assume that the $p$ explanatory variables in $\bm{X}$ are partitioned into $K$ mutually non-overlapping groups, and let $p_k \ (k=1,\dots,K)$ denote the number of explanatory variables in group $k$. 
Let $\bm{X}^{(k)} \in \mathbb{R}^{n \times p_k}$ be the explanatory matrix corresponding to group $k$, and express its singular value decomposition as $\bm{X}^{(k)} = \bm{L}^{(k)} \bm{\Sigma}^{(k)} {\bm{R}^{(k)}}^{\mathsf{T}}$. 
Under this setting, the proposed method can be written as in Eq.$\eqref{object_group}$.
\begin{align}\label{object_group}
     \underset{\bm{C}, \bm{D}} {\operatorname{min}}\ \frac{1}{2}{\|\bm{Y}-\bm{X}\bm{C}\bm{D}^\mathsf{T} \|}^2_F+\lambda\sum^K_{k=1}\sum^{p_k}_{i=1} \|\bm{C}^{(k)}_{i}\|_2
     +\frac{\theta}{2}\sum^K_{k=1}
{\rm{tr}}\left({\bm{C}^{(k)}}^\mathsf{T}\bm{A}^{(k)}\bm{C}^{(k)}\right) \qquad   {\rm{s.t.}} \qquad \bm{D}^\mathsf{T}\bm{D}=\bm{I}_r.
\end{align}
When no group structure is present in the explanatory variables, the objective function of the proposed method reduces to the case where all explanatory variables are treated as a single group, that is, $k=1$ in Eq.$\eqref{object_group}$.  
First, we explain the variables and parameters used in Eq.$\eqref{object_group}$. 
The parameters $\lambda$ and $\theta \ (\geq 0)$ are nonnegative hyperparameters determined by cross-validation (CV). 
The notation $\|\cdot\|_F$ denotes the Frobenius norm, $\|\cdot\|_2$ denotes the $\ell_2$ norm, and $\mathrm{tr}(\cdot)$ denotes the trace of a matrix. 
Moreover, $\bm{C}^{(k)} \in \mathbb{R}^{p_k \times r}$ denotes the submatrix of the latent variable matrix $\bm{C}$ corresponding to group $k$, and $\bm{C}^{(k)}_{i} \in \mathbb{R}^{1 \times r} \ (i=1,\dots,p_k)$ denotes the $i$th row vector of $\bm{C}^{(k)}$. Furthermore, $\bm{A}^{(k)} \in \mathbb{R}^{p_k \times p_k}$ is a positive semidefinite matrix that accounts for the correlation structure among the explanatory variables in group $k$, and is defined by the following expression Eq.$\eqref{semidefinite matirx}$.
%式番号で~で定義される式
\begin{align}\label{semidefinite matirx}
\bm{A}^{(k)}=\bm{R}^{(k)}\bm{\Sigma}_{(\sigma_{k_1}^2-\sigma_{k_j}^2)}{\bm{R}^{(k)}}^\mathsf{T}\qquad (k=1,\dots,K),
\end{align}
where $\bm{R}^{(k)} \in \mathbb{R}^{p_k \times p_k}$ denotes the right singular matrix obtained from the singular value decomposition of $\bm{X}^{(k)}$, and 
$\bm{\Sigma}_{(\sigma_{k_1}^{2}-\sigma_{k_j}^{2})}
=\mathrm{diag}\left(\sigma_{k_1}^{2}-\sigma_{k_1}^{2}, \sigma_{k_1}^{2}-\sigma_{k_2}^{2}, \dots, \sigma_{k_1}^{2}-\sigma_{k_{m_k}}^{2}\right)$ 
is a diagonal matrix whose diagonal elements are the differences of the squared singular values within each group. 
Here, let $\sigma_{k_j} \ (j=1,\dots,m_k)$ denote the singular values obtained from the singular value decomposition of $\bm{X}^{(k)}$, satisfying 
$\sigma_{k_1} \geq \sigma_{k_2} \geq \dots \geq \sigma_{k_{m_k}} (>0)$, where $m_k = \mathrm{rank}(\bm{X}^{(k)})$.
\par{}
\quad Next, we explain each term in Eq.$\eqref{object_group}$. 
The first term in Eq.$\eqref{object_group}$ corresponds to the objective function of RRR. 
This term enables the estimation of the matrix of regression coefficients while accounting for the correlations among the response variables. 
The second term in Eq.$\eqref{object_group}$ is a Group Lasso regularization term, where each row of $\bm{C}^{(k)}$ is regarded as a group and sparsity is imposed in a row-wise manner. 
This regularization term facilitates the selection of explanatory variables that are relevant to the response variables. 
The third term in Eq.$\eqref{object_group}$ is a regularization term based on pcLasso, constructed using the singular values within each group. 
This term strongly biases the regression coefficients toward principal components of the explanatory variables with large variance, assigning larger weights to regression coefficients corresponding to principal components with larger eigenvalues. 
The magnitude of the weighting depends on the squared differences of the singular values, $(\sigma_{k_1}^2 - \sigma_{k_j}^2)$. 
Moreover, by constructing a positive semidefinite matrix $\bm{A}^{(k)}$ for each group, the estimation of the matrix of regression coefficients reflects the correlation structure within each group. 
In applications such as gene expression data, where variables form biological or functional groups, incorporating this group structure enables coefficient estimation and variable selection at the group level and is expected to improve predictive accuracy. 
Finally, when $k=1$ and $\theta=0$, the resulting optimization problem coincides with that of SRRR \citep{chen2012sparse}. 
Therefore, the proposed method can also be regarded as an extension of SRRR.

\subsection{Estimation algorithm}
\quad Here, we describe the update equations for the parameters $\bm{C}$ and $\bm{D}$ in the objective function given by Eq.$\eqref{object_group}$ proposed in Section 2.2, as well as the estimation algorithm.
These parameters, $\bm{C}$ and $\bm{D}$, are estimated using an alternating least squares (ALS) algorithm \citep{carroll1970analysis,harshman1970foundations}, in which one parameter is fixed while the other is updated alternately.
Let $\hat{\bm{C}} \in \mathbb{R}^{p \times r}$ and $\hat{\bm{D}} \in \mathbb{R}^{q \times r}$ denote the estimators of $\bm{C}$ and $\bm{D}$ obtained by the ALS algorithm, respectively. 
Since $\bm{C}$ is estimated in a group-wise manner, we denote the submatrix of the latent variable matrix $\hat{\bm{C}}$ corresponding to group $k$ by $\hat{\bm{C}}^{(k)}$.
In order to estimate these parameters, we use the following propositions.

\begin{prop}
Fixing $\bm{D}$ as $\bm{D}^{*}$, the $i$-th row vector $\hat{\bm{C}}_{i}^{(k)} \in \mathbb{R}^{1 \times r}$ $(i = 1, \ldots, p_k)$ of the estimator $\hat{\bm{C}}^{(k)}$ is updated in order to minimize Eq.$\eqref{object_group}$ by the subgradient method \citep{friedman2007pathwise}.
\begin{align*}
    \hat{\bm{C}}_{i}^{(k)}  \leftarrow \frac{1}{{\bm{x}^{(k)}_{i}}^\mathsf{T}\bm{x}^{(k)}_i+\theta A_{ii}^{(k)}}\left(1-
    \frac{\lambda}{\|\bm{x}^{(k)}_{i}\bm{Q}_{i}-\theta \bm{s}_{i}\|_2}\right)_+
   \left({\bm{x}^{(k)}_{i}}^{\mathsf{T}}\bm{Q}_{i}-\theta\bm{s}_{i}\right),
\end{align*}
where, $\bm{x}^{(k)}_{i} \in \mathbb{R}^{n \times 1}$ denote the $i$-th column of the explanatory variable matrix $\bm{X}^{(k)}$ for group $k$,and $\bm{A}^{(k)}_{i} \in \mathbb{R}^{1 \times p_k}$ denote the $i$-th row of the positive semidefinite matrix $\bm{A}^{(k)}$.
The $i$-th diagonal element of $\bm{A}^{(k)}$ is denoted by $A^{(k)}_{ii}$. 
Furthermore, let $\tilde{\bm{C}}^{(k)}_{\ell} \, (\ell \neq i) \in \mathbb{R}^{(p_k-1) \times r}$ be the matrix obtained by excluding the $i$-th row vector $\hat{\bm{C}}^{(k)}_{i}$ from the latent variable matrix of group $k$ to be updated. Then, $\bm{Q}_{i} \in \mathbb{R}^{n \times r}$ is defined as
$\bm{Q}_{i} = \bm{Z}^{(k)} - \sum_{\ell \neq i}^{p_k} \bm{X}^{(k)}_{\ell} \tilde{\bm{C}}^{(k)}_{\ell}$, and $\bm{s}_{i} \in \mathbb{R}^{1 \times r}$ is defined as $\bm{s}_{i} = \bm{A}^{(k)}_{i} \bm{C}^{(k)} - A^{(k)}_{ii} \bm{C}^{(k)}_{i}$. Here, $\bm{Z}^{(k)}= \bm{Y}\bm{D}^{*} - \bm{X}^{(-k)} \bm{C}^{(-k)} \in \mathbb{R}^{n \times p_k}$ is the partial residual for group $k$ to be updated, where $\bm{X}^{(-k)}$ and $\bm{C}^{(-k)}$ denote the explanatory variable matrix and the latent variable matrix excluding $\bm{X}^{(k)}$ and $\bm{C}^{(k)}$, respectively, and $(z)_+ = \max(0, z)$.
\end{prop}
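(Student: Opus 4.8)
The plan is to treat the update as a single block of a block-coordinate-descent scheme. With $\bm{D}$ fixed at $\bm{D}^{*}$, the objective in Eq.~\eqref{object_group} is convex in $\bm{C}$, so it suffices to minimize it over the one row $\bm{C}_i^{(k)}$ with all remaining entries held fixed and to verify the resulting first-order optimality condition. The first step is to simplify the Frobenius loss using the constraint ${\bm{D}^{*}}^\mathsf{T}\bm{D}^{*}=\bm{I}_r$: expanding $\tfrac12\|\bm{Y}-\bm{X}\bm{C}{\bm{D}^{*}}^\mathsf{T}\|_F^2$ and applying the cyclic property of the trace collapses the quadratic factor to $\mathrm{tr}(\bm{C}^\mathsf{T}\bm{X}^\mathsf{T}\bm{X}\bm{C})$, so that, up to an additive constant independent of $\bm{C}$, the loss equals $\tfrac12\|\bm{Y}\bm{D}^{*}-\bm{X}\bm{C}\|_F^2$. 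This recasts the loss as an ordinary multivariate regression of the rotated response $\bm{Y}\bm{D}^{*}$ onto $\bm{X}$.

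I would then peel off the contribution of group $k$ and of row $i$. Introducing the partial residual $\bm{Z}^{(k)}=\bm{Y}\bm{D}^{*}-\bm{X}^{(-k)}\bm{C}^{(-k)}$ and removing the remaining rows of group $k$ gives $\bm{Y}\bm{D}^{*}-\bm{X}\bm{C}=\bm{Q}_i-\bm{x}_i^{(k)}\bm{C}_i^{(k)}$, so the loss restricted to $\bm{C}_i^{(k)}$ is $\tfrac12\|\bm{Q}_i-\bm{x}_i^{(k)}\bm{C}_i^{(k)}\|_F^2$. The Group Lasso term contributes the single summand $\lambda\|\bm{C}_i^{(k)}\|_2$. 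The delicate piece is the pcLasso quadratic $\tfrac{\theta}{2}\mathrm{tr}({\bm{C}^{(k)}}^\mathsf{T}\bm{A}^{(k)}\bm{C}^{(k)})$: expanding it row-wise, collecting every term that involves $\bm{C}_i^{(k)}$, and using the symmetry of $\bm{A}^{(k)}$ to merge the two cross terms yields a diagonal part $\tfrac{\theta}{2}A_{ii}^{(k)}\|\bm{C}_i^{(k)}\|_2^2$ and a linear part $\theta\,\bm{C}_i^{(k)}\bm{s}_i^\mathsf{T}$ with $\bm{s}_i=\bm{A}_i^{(k)}\bm{C}^{(k)}-A_{ii}^{(k)}\bm{C}_i^{(k)}$, exactly as in the statement.

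Collecting the three contributions reduces the subproblem to minimizing $\tfrac{\eta}{2}\|\bm{C}_i^{(k)}\|_2^2-\bm{C}_i^{(k)}\bm{v}^\mathsf{T}+\lambda\|\bm{C}_i^{(k)}\|_2$ over $\bm{C}_i^{(k)}$, where $\eta={\bm{x}_i^{(k)}}^\mathsf{T}\bm{x}_i^{(k)}+\theta A_{ii}^{(k)}$ and $\bm{v}={\bm{x}_i^{(k)}}^\mathsf{T}\bm{Q}_i-\theta\bm{s}_i$. This is a single-block Group Lasso problem, which I would solve by writing the stationarity condition $0\in\eta\bm{C}_i^{(k)}-\bm{v}+\lambda\,\partial\|\bm{C}_i^{(k)}\|_2$ from the subgradient method. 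On the branch $\bm{C}_i^{(k)}\neq\bm{0}$ the subdifferential equals $\lambda\bm{C}_i^{(k)}/\|\bm{C}_i^{(k)}\|_2$, which forces $\bm{C}_i^{(k)}$ to be a nonnegative multiple of $\bm{v}$; taking norms then gives $\|\bm{C}_i^{(k)}\|_2=(\|\bm{v}\|_2-\lambda)/\eta$, and back-substitution produces the soft-thresholded expression, while the degenerate branch $\|\bm{v}\|_2\le\lambda$ yields $\bm{C}_i^{(k)}=\bm{0}$ and supplies the positive-part operator $(\cdot)_+$.

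I expect the main obstacle to be the bookkeeping in the pcLasso quadratic term, namely correctly isolating and symmetrizing the cross terms so that the linear coefficient is exactly $-\theta\bm{s}_i$ with $\bm{s}_i$ as defined, since the Frobenius reduction and the concluding subgradient argument are essentially standard once the subproblem has been cast in the canonical form $\tfrac{\eta}{2}\|\bm{C}_i^{(k)}\|_2^2-\bm{C}_i^{(k)}\bm{v}^\mathsf{T}+\lambda\|\bm{C}_i^{(k)}\|_2$.
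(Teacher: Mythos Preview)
Your proposal is correct and follows essentially the same route as the paper's proof: the paper likewise uses ${\bm{D}^{*}}^\mathsf{T}\bm{D}^{*}=\bm{I}_r$ to reduce the loss to $\tfrac12\|\bm{Y}\bm{D}^{*}-\bm{X}\bm{C}\|_F^2$, introduces the partial residuals $\bm{Z}^{(k)}$ and $\bm{Q}_i$, splits the pcLasso quadratic into the diagonal term $\theta A_{ii}^{(k)}\bm{C}_i^{(k)}$ and the off-diagonal term $\theta\bm{s}_i$, and then solves the resulting one-block subproblem by the subgradient case split $\|\bm{C}_i^{(k)}\|_2=0$ versus $\neq 0$, taking norms and back-substituting exactly as you describe. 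Your compact parametrization via $\eta$ and $\bm{v}$ is a notational convenience, not a different argument.
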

\begin{proof}\textit{
First, we rewrite the objective function in Eq.$\eqref{object_group}$ as a function of $\bm{C}$.
\begin{align*}
     {\|\bm{Y}-\bm{X}\bm{C}{\bm{D}^{*}}^\mathsf{T} \|}^2_F&={\rm{tr}}\left({\bm{Y}^{\mathsf{T}}\bm{Y}}\right)-2{\rm{tr}}\left({\bm{D}^{*}\bm{C}^{\mathsf{T}}\bm{X}^{\mathsf{T}}\bm{Y}}\right)+{\rm{tr}}{\left(\bm{D}^{*}\bm{C}^{\mathsf{T}}\bm{X}^{\mathsf{T}}\bm{XC}{\bm{D}^{*}}^{\mathsf{T}}\right)}\\&={\rm{tr}}\left(\bm{Y}^{\mathsf{T}}\bm{Y}\right)-  {\rm{tr}}{\left(\bm{D}^{*\mathsf{T}}\bm{D}^{*}\bm{Y}^{\mathsf{T}}\bm{Y}\right)}+
    {\rm{tr}}{\left(\bm{D}^{*\mathsf{T}}\bm{D}^{*}\bm{Y}^{\mathsf{T}}\bm{Y}\right)}\\&+{\rm{tr}}{\left(\bm{D}^{*\mathsf{T}}\bm{D}^{*}\bm{C}^{\mathsf{T}}\bm{X}^{\mathsf{T}}\bm{XC}\right)}-2{\rm{tr}}{\left(\bm{C}^{\mathsf{T}}\bm{X}^{\mathsf{T}}\bm{YD}^{*}\right)}\\&={\rm{const}}+{\rm{tr}}{\left(\bm{D}^{*\mathsf{T}}\bm{Y}^{\mathsf{T}}\bm{Y}\bm{D}^{*}\right)}+{\rm{tr}}{\left(\bm{C}^{\mathsf{T}}\bm{X}^{\mathsf{T}}\bm{XC}\right)}-2{\rm{tr}}{\left(\bm{C}^{\mathsf{T}}\bm{X}^{\mathsf{T}}\bm{YD}^{*}\right)}\\&=
    {\rm{const}}+
    {\|\bm{Y}\bm{D}^{*}-\bm{X}\bm{C} \|}^2_F,
\end{align*}
where, ${\rm const}$ represents terms independent of the parameter being updated.
Therefore, the minimization problem in Eq.$\eqref{object_group}$ can be rewritten as
Eq.$\eqref{C_object_group}$.
\begin{align}\label{C_object_group}
    \underset{\bm{C}} {\operatorname{min}}\ \  &\frac{1}{2}{\|\bm{YD}^{*}-\bm{X}\bm{C}\|}^2_F+\sum^K_{k=1}\sum^{p_k}_{i=1} \|\bm{C}^{(k)}_{i}\|_2+\frac{\theta}{2}\sum^K_{k=1}
     {\rm{tr}}\left({\bm{C}^{(k)}}^\mathsf{T}\bm{A}^{(k)}\bm{C}^{(k)}\right).
\end{align}
Fixing all parameters except for the submatrix $\bm{C}^{(k)}$ corresponding to
the $k$th group of the latent variable matrix $\bm{C}$ to be updated,
we rewrite the objective function in Eq.$\eqref{C_object_group}$ using the partial residual
$\bm{Z}^{(k)}$ for group $k$.
\begin{align}\label{sub-gradient}
  & \frac{1}{2}{\|\bm{YD}^{*}-\bm{X}^{(-k)}\bm{C}^{(-k)}-\bm{X}^{(k)}\bm{C}^{(k)}\|}^2_F+\sum^K_{k=1}\sum^{p_k}_{i=1} \|\bm{C}^{(k)}_{i}\|_2\nonumber
  +\frac{\theta}{2}\sum^K_{k=1}
{\rm{tr}}\left({\bm{C}^{(k)}}^\mathsf{T}\bm{A}^{(k)}\bm{C}^{(k)}\right)\nonumber\\&
=\frac{1}{2}{\|\bm{Z}^{(k)}-\bm{X}^{(k)}\bm{C}^{(k)}\|}^2_F+\sum^K_{k=1}\sum^{p_k}_{i=1} \|\bm{C}^{(k)}_{i}\|_2+\frac{\theta}{2}\sum^K_{k=1}
     {\rm{tr}}\left({\bm{C}^{(k)}}^\mathsf{T}\bm{A}^{(k)}\bm{C}^{(k)}\right).
\end{align}
This minimization problem can be solved using the subgradient method \citep{friedman2007pathwise}.
The update of $\bm{C}^{(k)}$ is carried out by applying a coordinate descent algorithm
to each row $\bm{C}^{(k)}_{i}$, and the subgradient of Eq.$\eqref{sub-gradient}$
with respect to $\bm{C}^{(k)}_{i}$ is derived.
\begin{align*}
    &{-\bm{x}^{(k)}_{i}}^{\mathsf{T}}\left(\bm{Q}_{i}-\bm{x}^{(k)}_{i}\bm{C}_{i}^{(k)}\right)+\lambda \bm{g}_{i} +\theta\bm{A}^{(k)}_{i}\bm{C}^{(k)}=\bm{0}_r,
\end{align*}
where, $\bm{g}_{i} \in \mathbb{R}^{1 \times r}$ denotes an element of the subdifferential
of the $\ell_{2}$ norm evaluated at $\bm{C}^{(k)}_{i}$, and is defined differently
depending on whether $\|\bm{C}^{(k)}_{i}\|_{2} = 0$ or $\|\bm{C}^{(k)}_{i}\|_{2} \neq 0$.
First, we consider the case $\|\bm{C}^{(k)}_{i}\|_{2} = 0$.
\begin{align*}
\|\bm{g}_{i}\|_2\leq1 \ \  \left(\|\bm{C}_{i}^{(k)}\|_2=0\right).
\end{align*}
Next, we consider the case $\|\bm{C}^{(k)}_{i}\|_{2} \neq 0$.
\begin{align*}
\bm{g}_{i} = \frac{\bm{C}_{i}^{(k)}}{\|\bm{C}_{i}^{(k)}\|_2}  \ \ \left(\|\bm{C}_{i}^{(k)}\|_2 \neq 0\right).
\end{align*}
Here, we consider the minimization problem by distinguishing between the cases
$\|\bm{C}^{(k)}_{i}\|_{2} = 0$ and $\|\bm{C}^{(k)}_{i}\|_{2} \neq 0$.
We first focus on the minimization problem in the case
$\|\bm{C}^{(k)}_{i}\|_{2} = 0$.
Under the condition $\|\bm{C}^{(k)}_{i}\|_{2} = 0$,
we take the subdifferential of Eq.$\eqref{C_object_group}$
with respect to $\bm{C}^{(k)}_{i}$.
\begin{align}\label{C_iの推定(0)}
    &{-\bm{x}^{(k)}_{i}}^{\mathsf{T}}\left(\bm{Q}_{i}-\bm{X}_{i}\bm{C}_{i}^{(k)}\right)+\lambda \bm{g}_{i} +\theta\bm{A}^{(k)}_{i}\bm{C}=\bm{0}_r, \nonumber\\
     \Longleftrightarrow \  &{-\bm{x}^{(k)}_{i}}^{\mathsf{T}}\left(\bm{Q}_{i}-\bm{x}^{(k)}_{i}\bm{C}_{i}^{(k)}\right)+\lambda \bm{g}_i+\theta\left(\bm{A}^{(k)}_{i}\bm{C}-A^{(k)}_{ii}\bm{C}_{i}^{(k)}+A^{(k)}_{ii}\bm{C}_{i}^{(k)}\right)=\bm{0}_r, \nonumber\\
      \Longleftrightarrow \  &{-\bm{x}^{(k)}_{i}}^{\mathsf{T}}\bm{Q}_{i}+\lambda\bm{g}_{i}+\theta\bm{s}_{i}=\bm{0}_r,
      \nonumber\\
    \Longleftrightarrow \  &\bm{g}_{i}=\frac{1}{\lambda}\left({\bm{x}^{(k)}_{i}}^{\mathsf{T}}\bm{Q}_{i}-\theta\bm{s}_{i}\right).
\end{align}
Next, we consider the minimization problem in the case $\|\bm{C}^{(k)}_{i}\|_{2} \neq 0$.
Under the condition $\|\bm{C}^{(k)}_{i}\|_{2} \neq 0$, we take the subdifferential of Eq.$\eqref{sub-gradient}$ with respect to $\bm{C}^{(k)}_{i}$.
\begin{align}\label{C_i_groupの推定}
    &-{\bm{x}^{(k)}_{i}}^{\mathsf{T}}\left(\bm{Q}_{i}-\bm{x}^{(k)}_{i}\bm{C}_{i}^{(k)}\right)+
    \lambda \frac{\bm{C}_{i}^{(k)}}{\|\bm{C}^{(k)}_{i}\|_2}+
    \theta\bm{A}^{(k)}_{i}\bm{C}^{(k)}=\bm{0}_r,\nonumber\\
    \Longleftrightarrow \  
    &-{\bm{x}^{(k)}_{i}}^{\mathsf{T}}\left(\bm{Q}_{i}-\bm{X}_{i}^{(k)}\bm{C}_{i}^{(k)}\right)+
    \lambda \frac{\bm{C}_{i}^{(k)}}{\|\bm{C}_{i}^{(k)}\|_2}+
    \theta\left(\bm{A}^{(k)}_{i}\bm{C}^{(k)}-A_{ii}\bm{C}_{i}^{(k)}+A_{ii}\bm{C}_{i}^{(k)}\right)=\bm{0}_r,\nonumber\\
    \Longleftrightarrow \  
    &-{\bm{x}^{(k)}_{i}}^{\mathsf{T}}\left(\bm{Q}_{i}-\bm{x}^{(k)}_{i}\bm{C}_{i}^{(k)}\right)+
    \lambda \frac{\bm{C}_{i}^{(k)}}{\|\bm{C}_{i}^{(k)}\|_2}+
    \theta\left(\bm{s}_{i}+A^{(k)}_{ii}\bm{C}_{i}^{(k)}\right)=\bm{0}_r, \nonumber\\
     \Longleftrightarrow \  
     &{\bm{x}^{(k)}_{i}}^{\mathsf{T}}\bm{x}^{(k)}_{i}\bm{C}_{i}^{(k)}+
    \lambda \frac{\bm{C}_{i}^{(k)}}{\|\bm{C}_{i}^{(k)}\|_2}+\theta A^{(k)}_{ii}\bm{C}_{i}^{(k)}=\left({\bm{x}^{(k)}_{i}}^{\mathsf{T}}\bm{Q}_{i}-\theta \bm{s}_{i}\right),\nonumber\\
    \Longleftrightarrow \  
    &\bm{C}_{i}^{(k)}=
    \left ({\bm{x}^{(k)}_{i}}^{\mathsf{T}}\bm{x}^{(k)}_{i}+\frac{\lambda}{\|\bm{C}_{i}^{(k)}\|_2}+\theta A^{(k)}_{ii}\right )^{-1}
    \left({\bm{x}^{(k)}_{i}}^{\mathsf{T}}\bm{Q}_{i}-\theta \bm{s}_{i}\right).
\end{align}
Next, we take the $\ell_{2}$ norm on both sides of Eq.$\eqref{C_i_groupの推定}$.
\begin{align}\label{C_group_iのL2ノルム}
     &\|\bm{C}_{i}^{(k)}\|_2=\left\|
    \left ({\bm{x}^{(k)}_{i}}^{\mathsf{T}}\bm{x}^{(k)}_{i}+\frac{\lambda}{\|\bm{C}_{i}^{(k)}\|_2}+\theta A^{(k)}_{ii}\right )^{-1}
    \left({\bm{x}^{(k)}_{i}}^{\mathsf{T}}\bm{Q}_{i}-\theta \bm{s}_{i}\right)\right\|_2, \nonumber\\
     \Longleftrightarrow \ &  {\bm{x}^{(k)}_{i}}^{\mathsf{T}}\bm{x}^{(k)}_{i}\|\bm{C}_{i}^{(k)}\|_2+\lambda+\theta A_{ii}{\|\bm{C}_{i}^{(k)}\|_2}=
    \left\|{\bm{x}^{(k)}_{i}}^{\mathsf{T}}\bm{Q}_{i}-\theta \bm{s}_{i}\right\|_2, \nonumber \\ 
     \Longleftrightarrow \ &
     \|\bm{C}_{i}^{(k)}\|_2=
    \frac{\|{\bm{x}^{(k)}_{i}}^{\mathsf{T}}\bm{Q}_{i}-\theta \bm{s}_{i}\|_2-\lambda}{{\bm{x}^{(k)}_{i}}^{\mathsf{T}}\bm{x}^{(k)}_{i}+\theta A^{(k)}_{ii}}.
\end{align}
By combining Eq.$\eqref{C_i_groupの推定}$ and Eq.$\eqref{C_group_iのL2ノルム}$,
we obtain the following expression for $\bm{C}^{(k)}_{i}$.
\begin{align}\label{C_iの推定(0以外)}
 \bm{C}_{i}^{(k)}=&
    \left ({\bm{x}^{(k)}_{i}}^{\mathsf{T}}\bm{x}^{(k)}_i+\frac{\lambda}{\|\bm{C}_{i}^{(k)}\|_2}+\theta A^{(k)}_{ii}\right )^{-1}
    \left({\bm{x}^{(k)}_{i}}^{\mathsf{T}}\bm{Q}_{i}-\theta \bm{s}_{i}\right), \nonumber\\
    \Longleftrightarrow \ 
    \bm{C}_{i}^{(k)}=&
\left({\bm{x}^{(k)}_{i}}^{\mathsf{T}}\bm{x}^{(k)}_{i}+\lambda\frac{{\bm{x}^{(k)}_{i}}^{\mathsf{T}}\bm{x}^{(k)}_{i}+\theta A^{(k)}_{ii}}{\|\bm{x}^{(k)}_{i}\bm{Q}_{i}-\theta \bm{s}_{i}\|_2-\lambda}+\theta A^{(k)}_{i i}\right)^{-1}\nonumber
\left({\bm{x}^{(k)}_{i}}^{\mathsf{T}}\bm{Q}_{i}-\theta \bm{s}_{i}\right), \nonumber\\
 \Longleftrightarrow \bm{C}_{i}^{(k)}=&
     \left\{\frac{\|{\bm{x}^{(k)}_{i}}^{\mathsf{T}}\bm{Q}_{i}-\theta \bm{s}_i\|_2\left({\bm{x}^{(k)}_{i}}^{\mathsf{T}}\bm{x}^{(k)}_{i}+\theta A_{ii}\right)}{\|{\bm{x}^{(k)}_{i}}^{\mathsf{T}}\bm{Q}_{i}-\theta \bm{s}_{i}\|_2-\lambda}\right\}^{-1}\nonumber\left({\bm{x}^{(k)}_{i}}^{\mathsf{T}}\bm{Q}_{i}-\theta \bm{s}_{i}\right),\nonumber\\
\Longleftrightarrow \ 
     \bm{C}_{i}^{(k)}=&
     \frac{1}{{\bm{x}^{(k)}_{i}}^{\mathsf{T}}\bm{x}^{(k)}_{i}+\theta A^{(k)}_{ii}}\left(1-
    \frac{\lambda}{\|{\bm{x}^{(k)}_{i}}^{\mathsf{T}}\bm{Q}_{i}-\theta \bm{s}_{i}\|_2}\right)\left({\bm{x}^{(k)}_{i}}^{\mathsf{T}}\bm{Q}_{i}-\theta \bm{s}_{i}\right).
\end{align}
Therefore, by combining Eq.$\eqref{C_iの推定(0)}$ for the case $\|\bm{C}^{(k)}_{i}\|_{2} = 0$ and Eq.$\eqref{C_iの推定(0以外)}$ for the case $\|\bm{C}^{(k)}_{i}\|_{2} \neq 0$,
we obtain Eq.$\eqref{命題1}$, which completes the proof of Proposition 1}.
\begin{align}\label{命題1}
    \bm{C}_{i}^{(k)}=
     \frac{1}{{\bm{x}^{(k)}_{i}}^{\mathsf{T}}\bm{x}^{(k)}_{i}+\theta A^{(k)}_{ii}}\left(1-
    \frac{\lambda}{\|{\bm{x}^{(k)}_{i}}^{\mathsf{T}}\bm{Q}_{i}-\theta \bm{s}_{i}\|_2}\right)_+\left({\bm{x}^{(k)}_{i}}^{\mathsf{T}}\bm{Q}_{i}-\theta \bm{s}_{i}\right).
\end{align}
\end{proof}
%命題を揃える
\begin{prop}
Fixing $\bm{C}$ as $\bm{C}^{*}$, $\hat{\bm{D}}$ is updated in order to minimize Eq.$\eqref{object_group}$: 
\begin{align*}
        {\hat{\bm{D}}} \leftarrow \bm{U}{\bm{V}}^{\mathsf{T}},
\end{align*}
where, $\bm{U} \in \mathbb{R}^{q \times h}$ and $\bm{V} \in \mathbb{R}^{r \times h}$ are the matrices composed of the left and right singular vectors obtained from the singular value decomposition of $\bm{Y}^{\mathsf{T}}\bm{X}\bm{C}^{*}$. 
Let $h$ denote the rank of $\bm{Y}^{\mathsf{T}}\bm{X}\bm{C}^*$. 
Both $\bm{U}$ and $\bm{V}$ are column-orthonormal matrices, and $\bm{\Lambda} \in \mathbb{R}^{h \times h}$ is a diagonal matrix.
\end{prop}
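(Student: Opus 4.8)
The plan is to exploit the fact that, once $\bm{C}$ is held fixed at $\bm{C}^{*}$, the Group Lasso and pcLasso terms in Eq.$\eqref{object_group}$ no longer involve $\bm{D}$, so the minimization collapses to the constrained least-squares problem $\min_{\bm{D}^\mathsf{T}\bm{D}=\bm{I}_r}\tfrac{1}{2}\|\bm{Y}-\bm{X}\bm{C}^{*}\bm{D}^\mathsf{T}\|_F^2$. First I would expand this squared Frobenius norm via the trace, obtaining a constant term $\mathrm{tr}(\bm{Y}^\mathsf{T}\bm{Y})$, a cross term $-2\,\mathrm{tr}(\bm{D}{\bm{C}^{*}}^\mathsf{T}\bm{X}^\mathsf{T}\bm{Y})$, and a quadratic term $\mathrm{tr}(\bm{D}{\bm{C}^{*}}^\mathsf{T}\bm{X}^\mathsf{T}\bm{X}\bm{C}^{*}\bm{D}^\mathsf{T})$. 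The crucial simplification is that, by the cyclic property of the trace together with the constraint $\bm{D}^\mathsf{T}\bm{D}=\bm{I}_r$, the quadratic term reduces to $\mathrm{tr}({\bm{C}^{*}}^\mathsf{T}\bm{X}^\mathsf{T}\bm{X}\bm{C}^{*})$, which is independent of $\bm{D}$. Hence minimizing the loss over $\bm{D}$ is equivalent to maximizing the linear functional $\mathrm{tr}(\bm{D}^\mathsf{T}\bm{M})$ with $\bm{M}:=\bm{Y}^\mathsf{T}\bm{X}\bm{C}^{*}$, which is the classical orthogonal Procrustes problem.

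Next I would substitute the singular value decomposition $\bm{M}=\bm{U}\bm{\Lambda}\bm{V}^\mathsf{T}$ and rewrite the objective, again via the cyclic property, as $\mathrm{tr}(\bm{\Lambda}\bm{W})$ where $\bm{W}:=\bm{V}^\mathsf{T}\bm{D}^\mathsf{T}\bm{U}\in\mathbb{R}^{h\times h}$. Since $\bm{U}$ and $\bm{V}$ have orthonormal columns and $\bm{D}$ has orthonormal columns, each factor has operator norm at most one, so $\|\bm{W}\|_{\mathrm{op}}\le 1$; in particular every diagonal entry obeys $W_{ii}\le 1$. Because $\bm{\Lambda}$ is diagonal with nonnegative entries, this yields $\mathrm{tr}(\bm{\Lambda}\bm{W})=\sum_{i}\Lambda_{ii}W_{ii}\le\sum_{i}\Lambda_{ii}=\mathrm{tr}(\bm{\Lambda})$. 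The remaining step is to verify that the candidate $\hat{\bm{D}}=\bm{U}\bm{V}^\mathsf{T}$ attains this bound: substituting gives $\bm{W}=\bm{V}^\mathsf{T}\bm{V}\bm{U}^\mathsf{T}\bm{U}=\bm{I}_h$, so the objective equals $\mathrm{tr}(\bm{\Lambda})$ and the maximum is achieved.

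The step I expect to be the main obstacle is reconciling the stated update with the orthogonality constraint when $\bm{M}$ is rank-deficient. The bound above shows $\bm{U}\bm{V}^\mathsf{T}$ maximizes the trace, but checking feasibility gives $\hat{\bm{D}}^\mathsf{T}\hat{\bm{D}}=\bm{V}\bm{U}^\mathsf{T}\bm{U}\bm{V}^\mathsf{T}=\bm{V}\bm{V}^\mathsf{T}$, which equals $\bm{I}_r$ only when $h=r$, i.e.\ when $\bm{Y}^\mathsf{T}\bm{X}\bm{C}^{*}$ has full column rank. In the generic full-rank case the argument is complete; otherwise $\bm{V}\bm{V}^\mathsf{T}$ is merely the orthogonal projection onto $\mathrm{col}(\bm{V})$, and one must make the maximizer feasible by extending $\bm{U}$ and $\bm{V}$ to full column-orthonormal systems so that the completed $\hat{\bm{D}}$ satisfies $\hat{\bm{D}}^\mathsf{T}\hat{\bm{D}}=\bm{I}_r$ while leaving the attained value $\mathrm{tr}(\bm{\Lambda})$ unchanged. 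I would therefore either state the full-rank case explicitly or include this completion argument to close the gap.
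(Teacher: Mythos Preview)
Your reduction is identical to the paper's: drop the penalty terms, expand the Frobenius norm in traces, use $\bm{D}^\mathsf{T}\bm{D}=\bm{I}_r$ to kill the quadratic term, and arrive at the maximization of $\mathrm{tr}\bigl((\bm{Y}^\mathsf{T}\bm{X}\bm{C}^{*})^\mathsf{T}\bm{D}\bigr)$ subject to $\bm{D}^\mathsf{T}\bm{D}=\bm{I}_r$. At that point the paper simply invokes the orthogonal Procrustes solution by citing Theorem~A.4.2 of \cite{adachi2016matrix}, whereas you supply a self-contained argument via the SVD and the bound $\mathrm{tr}(\bm{\Lambda}\bm{W})\le\mathrm{tr}(\bm{\Lambda})$. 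Both routes are standard; yours is more elementary and does not rely on an external reference, while the paper's is shorter.

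Your observation about the rank-deficient case $h<r$ is a genuine technical point that the paper does not address explicitly (it is presumably absorbed into the cited theorem). Your proposed fix---completing $\bm{U}$ and $\bm{V}$ to full orthonormal bases so that $\hat{\bm{D}}^\mathsf{T}\hat{\bm{D}}=\bm{I}_r$ while the attained trace value is unchanged---is the correct way to close that gap, and including it makes your argument more careful than the paper's on this point.
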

\begin{proof}\textit{
First, we rewrite the objective function in Eq.$\eqref{object_group}$ as a function of $\bm{D}$.
\begin{align}\label{D_object}
     \underset{\bm{D}} {\operatorname{min}}\ \frac{1}{2}{\|\bm{Y}-\bm{X}\bm{C}^{*}\bm{D}^{\mathsf{T}} \|}^2_F \qquad  {\rm{s.t.}} \qquad \bm{D}^{\mathsf{T}}\bm{D}=\bm{I}_r.
\end{align}
Therefore, the minimization problem in Eq.$\eqref{D_object}$ can be rewritten as
Eq.$\eqref{D_max}$.
\begin{align}\label{D_max}
\frac{1}{2}{\|\bm{Y}-\bm{X}\bm{C}^{*}\bm{D}^{\mathsf{T}} \|}^2_F&=\frac{1}{2}{\rm{tr}}\left({\bm{Y}^{\mathsf{T}}\bm{Y}}\right)+\frac{1}{2}{\rm{tr}}{\left(\bm{D}\bm{C}^{*\mathsf{T}}\bm{X}^{\mathsf{T}}\bm{X}\bm{C}^{*}\bm{D}^{\mathsf{T}}\right)}
-{\rm{tr}}\left({\bm{D}\bm{C}^{*\mathsf{T}}\bm{X}^{\mathsf{T}}\bm{Y}}\right) \nonumber \\
&={\rm{const}}-{\rm{tr}}\left({\bm{C}^{*\mathsf{T}}\bm{X}^{\mathsf{T}}\bm{YD}}\right).
\end{align}
Therefore, the minimization problem in Eq.$\eqref{D_max}$ can be solved by
maximizing ${\rm tr}\ \left(\bm{C}^{\mathsf{T}}\bm{X}^{\mathsf{T}}\bm{Y}\bm{D}\right)$,
which yields the estimator $\hat{\bm{D}}$.
\begin{align*}
     \underset{\bm{D}} {\operatorname{max}}\ {\rm{tr}}\left\{\left(\bm{Y}^{\mathsf{T}}\bm{X}\bm{C}^{*}\right)^{\mathsf{T}}\bm{D} \right\} \qquad  {\rm{s.t.}} \qquad \bm{D}^{\mathsf{T}}\bm{D}=\bm{I}_r.
\end{align*}
According to Theorem A.4.2 in \cite{adachi2016matrix},
this maximization problem can be solved via an orthogonal Procrustes problem \citep{gower2004procrustes}
using the singular value decomposition, which establishes Proposition 2}.
\end{proof}

\quad The estimation algorithm of the proposed method is based on an alternating least squares algorithm, which repeatedly alternates between estimating the latent variable matrices $\bm{C}$ and $\bm{D}$, and is summarized in Algorithm 1.

\begin{algorithm}
\caption{\enskip Estimation algorithm of the proposed method}\label{alg1}
\begin{algorithmic}[1]
\normalfont
\REQUIRE $\bm{X}, \bm{Y}, \lambda, \theta$
\ENSURE  $\bm{C}, \bm{D}$
\STATE Set initial value for $\bm{C}^{[0]}, \bm{D}^{[0]}$
\FOR{$k=1$ to $K$}
    \STATE $\bm{L}^{(k)}\bm{\Sigma}^{(k)}{\bm{R}^{(k)}}^{\mathsf{T}}$ $\leftarrow$ singular value decomposition of $\bm{X}^{(k)}$
    \STATE $\bm{A}^{(k)}\leftarrow \bm{R}^{(k)}\bm{\Sigma}_{(\sigma_{k_1}^2-\sigma_{k_j}^2)}{\bm{R}^{(k)}}^{\mathsf{T}}$
    \ENDFOR
    \FOR{$t=1$ to $T$}
    \FOR{$k=1$ to $K$}
    \STATE $\bm{Z}^{(k)}\leftarrow\bm{Y}\bm{D}^{[t-1]}-\bm{X}^{(-k)}{\bm{C}^{(-k)}}^{[t-1]}$
    \FOR{$i=1$ to $p_k$}
    \STATE$\bm{Q}_{i}\leftarrow \bm{Z}^{(k)}-\sum_{\ell\neq i}^{p_k} {\bm{X}}^{(k)}_{\ell}
    \tilde{\bm{C}}_{\ell}^{(k)^{[t-1]}}$
    \STATE$\bm{s}_{i}\leftarrow \bm{A}^{(k)}_{i}{\bm{C}^{(k)}}^{[t-1]}-A^{(k)}_{i i }{\bm{C}_{i}^{(k)}}^{[t-1]}$
    \STATE$\bm{C}^{{(k)}^{[t]}}_{i}
    \leftarrow \frac{1}{{\bm{x}^{(k)}_{i}}^{\mathsf{T}}{\bm{x}^{(k)}_{i}}+\theta A^{(k)}_{ii}} \left(1-\frac{\lambda}{\|{\bm{x}_{i}^{(k)}}^{\mathsf{T}}\bm{Q}_{i}-\theta \bm{s}_{i}\|_2}\right)_+\left({\bm{x}_{i}^{(k)}}^{\mathsf{T}}\bm{Q}_{i}-\theta \bm{s}_{i}\right)$
    \ENDFOR
    \ENDFOR
    \STATE $\bm{U}\bm{\Lambda}\bm{V}^{\mathsf{T}}$ $\leftarrow$ singular value decomposition of $\bm{Y}^{\mathsf{T}}\bm{X}\bm{C}^{[t]}$ 
    \STATE $\bm{D}^{[t]}\leftarrow\bm{U}\bm{V}^{\mathsf{T}}$
    \ENDFOR
    \STATE$\bm{C} \leftarrow \bm{C}^{[t]}$
    \STATE$\bm{D} \leftarrow \bm{D}^{[t]}$
    \RETURN $\bm{C}, \bm{D}$
\end{algorithmic}
\end{algorithm}

\section{Numerical Study}
\quad In this Section, we present the simulation design and the results of the numerical simulations. 
\subsection{Simulation design}
\quad Here, we describe the design of the numerical simulations. 
In the numerical simulations of this paper, we conduct simulations in which the explanatory variables do not have a group structure or have a group structure.
\par{}
\quad First, we describe the parameters used for data generation. The number of response variables $q$ is fixed at 5.
Let $p$ denote the number of explanatory variables, $n$ the sample size, $p_0$ the number of nonzero rows in the matrix of regression coefficients, and $\tau$ the proportion of nonzero rows in the matrix of regression coefficients.
That is, the number of nonzero rows $p_0$ can be expressed as $p_0 = p \cdot \tau$.
The specific values of $p$, $n$, and $\tau$ are described later.
%グループ構造なしのデータ生成
\par{}
\quad Next, we describe the data generation process. For data generation, the rank, the number of groups, and the method for generating the explanatory variables follow \cite{tay2021principal}, with some modifications made to generate multiple response variables. We first describe the data generation procedure used in the numerical simulations for the case in which the explanatory variables do not have a group structure.
%pcLassoの論文に準拠して,一部多変量にするために,
%ランクが3,グループ数は10,説明変数の生成方法は先行研究に準拠
%説明変数
The explanatory variable matrix $\bm{X} \in \mathbb{R}^{n \times p}$ is constructed by separating it into the explanatory variable matrix $\bm{X}_{p_0} \in \mathbb{R}^{n \times p_0}$, which corresponds to the rows with nonzero elements in the matrix of regression coefficients, and the explanatory variable matrix $\bm{X}_{(p-p_0)} \in \mathbb{R}^{n \times (p-p_0)}$, which corresponds to the rows with zero elements.
The matrix $\bm{X}_{p_0}$ is generated as a matrix with correlated variables such that its rank is approximately three, using the following formula.
\begin{align*}
   \bm{X}_{p_0} = \sum_{i=1}^{3} i \cdot \bm{u}_i \bm{v}_i^{\mathsf{T}} + \bm{E}_{\bm{X}},
\end{align*}
%従うように変更
where, $\bm{u}_i \in \mathbb{R}^{n \times 1}$ and $\bm{v}_i \in \mathbb{R}^{p_0 \times 1}$ are generated from the standard normal distribution, and $\bm{E}_{\bm{X}} \in \mathbb{R}^{n \times p_0}$ is generated from the multivariate normal distribution ${\rm N}(\bm{0}_{p_0}, \bm{I}_{p_0})$. 
Here, $\bm{0}_{p_0} \in \mathbb{R}^{p_0 \times 1}$ denotes a $p_0$-dimensional column vector whose elements are all zero.
On the other hand, $\bm{X}_{(p-p_0)}$ is generated from the multivariate normal distribution ${\rm N}(\bm{0}_{(p-p_0)}, \bm{I}_{(p-p_0)})$.
%回帰係数
the matrix of regression coefficients is defined as $\bm{B} = \bm{C}\bm{D}^{\mathsf{T}} \in \mathbb{R}^{p \times q}$, and its rank $r$ is set to 3. 
The latent variable matrix $\bm{C}$ is partitioned as $\bm{C} = (\bm{C}_{p_0}, \bm{C}_{(p-p_0)}) \in \mathbb{R}^{p \times r}$.
The latent variable matrix $\bm{C}_{p_0} \in \mathbb{R}^{p_0 \times r}$, which contains nonzero elements, is constructed using the first three principal component vectors obtained by performing principal component analysis on $\bm{X}_{p_0}$.
The latent variable matrix $\bm{C}_{(p-p_0)} \in \mathbb{R}^{(p-p_0) \times r}$ is set to $\bm{C}_{(p-p_0)} = \bm{O}_{(p-p_0) \times r}$, where $\bm{O}$ denotes a matrix whose elements are all zero.
The latent variable matrix $\bm{D} \in \mathbb{R}^{q \times r}$ is obtained by performing a QR decomposition on a matrix generated from the distribution ${\rm N}(\bm{0}_r, \bm{I}_r)$, and using the resulting orthogonal matrix.
%それ以外
In addition, the error matrix $\bm{E} \in \mathbb{R}^{n \times q}$ is generated from the distribution ${\rm N}(\bm{0}_q, \bm{I}_q)$, and the response matrix $\bm{Y} \in \mathbb{R}^{n \times q}$ is constructed according to the following model.
\begin{align*}
  \bm{Y}=\bm{XC}\bm{D}^{\mathsf{T}}+\bm{E}.
\end{align*}
\par{}
%グループ構造ありのデータ生成
\quad Next, we present the data generation procedure used in the numerical simulations for the case in which the explanatory variables have a group structure.
%説明変数
The explanatory variable matrix $\bm{X} \in \mathbb{R}^{n \times p}$ is partitioned into 10 non-overlapping groups indexed by $k \in \{1, \dots, 10\}$, where the number of explanatory variables in each group is denoted by $p_k \in \{p_1, \dots, p_{10}\}$. 
Let $\bm{X}^{(k)} \in \mathbb{R}^{n \times p_k}$ denote the explanatory variable matrix for group $k$, which is generated separately for each group.
The first five groups, $k \in \{1, \dots, 5\}$, contain rows with nonzero elements in the matrix of regression coefficients, whereas the remaining five groups, $k \in \{6, \dots, 10\}$, do not contain any rows with nonzero elements.
\quad Here, let $p_{k_0}$ denote the number of nonzero rows in the matrix of regression coefficients for each group, so that $p_0 = 5 p_{k_0}$. 
For $k = 1, \dots, 5$, the matrix $\bm{X}^{(k)}$ is constructed by partitioning it as $\bm{X}^{(k)} = ( \bm{X}^{(k)}_{p_{k_0}}, \bm{X}^{(k)}_{(p_k - p_{k_0})})$, where $\bm{X}^{(k)}_{p_{k_0}} \in \mathbb{R}^{n \times p_{k_0}}$ corresponds to the rows with nonzero elements in the matrix of regression coefficients, and $\bm{X}^{(k)}_{(p_k - p_{k_0})} \in \mathbb{R}^{n \times (p_k - p_{k_0})}$ corresponds to the rows with zero elements.
The matrix $\bm{X}^{(k)}_{p_{k_0}}$ is generated in the same manner.
\begin{align*}
   \bm{X}^{(k)}_{p_{k_0}} = \sum_{i=1}^{3} i \cdot \bm{u}_i \bm{v}_i^{\mathsf{T}} + \bm{E}_{\bm{X}} \ \ \ (k=1,\dots , 5).
\end{align*}
In addition, $\bm{X}^{(k)}_{(p_k - p_{k_0})}$ is generated from the distribution ${\rm N}(\bm{0}_{(p_k - p_{k_0})}, \bm{I}_{(p_k - p_{k_0})})$. 
On the other hand, for the remaining five groups, the explanatory variable matrices $\bm{X}^{(k)} \ (k = 6, \dots, 10)$ are generated from ${\rm N}(\bm{0}_{p_k}, \bm{I}_{p_k})$.
%回帰係数行列
the matrix of regression coefficients is generated in the same manner, and the latent variable matrix $\bm{C}$ is generated separately for each group, where $\bm{C}^{(k)} \in \mathbb{R}^{p_k \times r}$ $(k = 1, \ldots, 10)$ denotes the latent variable matrix corresponding to group $k$. 
The first five groups contain nonzero elements, while the remaining five groups contain no nonzero elements.
For the first five groups, the latent variable matrices $\bm{C}^{(k)} \ (k = 1, \ldots, 5)$ are generated by partitioning $\bm{C}^{(k)} = (\bm{C}^{(k)}_{p_{k_0}}, \, \bm{C}^{(k)}_{(p_k - p_{k_0})})$, where $\bm{C}^{(k)}_{p_{k_0}} \in \mathbb{R}^{p_{k_0} \times r}$ consists of rows with nonzero elements, and
$\bm{C}^{(k)}_{(p_k - p_{k_0})} \in \mathbb{R}^{(p_k - p_{k_0}) \times r}$ consists of rows with zero elements.
For the latent variable matrix $\bm{C}^{(k)}_{p_{k_0}}$, principal component analysis is performed on the explanatory variable matrix $\bm{X}^{(k)}_{p_{k_0}}$ corresponding to the rows with nonzero elements, and the matrix is constructed using the principal component vectors up to the third principal component.
For the latent variable matrix $\bm{C}^{(k)}_{(p_k - p_{k_0})}$, we set $\bm{C}^{(k)}_{(p_k - p_{k_0})} = \bm{O}_{(p_k - p_{k_0}) \times r}$.
For the latent variable matrices of the remaining five groups, $\bm{C}^{(k)} \ (k = 6, \dots, 10)$, we set $\bm{C}^{(k)} = \bm{O}_{p_k \times r}$.
The latent variable matrix $\bm{D}$, the error matrix $\bm{E}$, and the response matrix $\bm{Y}$ are generated in the same manner as in the data generation procedure for the case without group structure.
\par{}
%各シナリオの構成
\quad Next, we describe the configuration of each scenario. In the numerical simulations, we consider the following scenarios with respect to the number of explanatory variables (Factor 1), the sample size (Factor 2), and the proportion of nonzero rows in the matrix of regression coefficients (Factor 3).
In this paper, we conduct numerical simulations over a total of 36 scenarios, obtained by combining $2$ (whether the explanatory variables have a group structure) $\times\ 2$ (Factor 1) $\times\ 3$ (Factor 2) $\times\ 3$ (Factor 3).\\

\par{}
\noindent \textbf{(Factor 1)}\\
\noindent{The number of explanatory variables is set as $p=200$ and $400$.}\\

\noindent \textbf{(Factor 2)}\\
\noindent{Sample size is set as $n,n_{\rm test}=100,500, $ and $1000$.}\\

\noindent \textbf{(Factor 3)}\\
\noindent{$\tau$ is proportion of nonzero rows in the matrix of regression coefficients, and set as $\tau=0.1,0.2$ and $0.25$.\\
\par{}
\quad Next, we describe the evaluation criteria used in the numerical simulations. 
In this study, the performance of the proposed and comparison methods is evaluated using the mean squared error (MSE) of the matrix of regression coefficients, the MSE of the predicted values, the true positive rate (TPR), and the true negative rate (TNR).
The MSE of the matrix of regression coefficients measures the discrepancy between the estimated regression coefficient matrix $\hat{\bm{B}}$ obtained by each method and the true regression coefficient matrix $\bm{B}$. In addition, the MSE of the predicted values evaluates how closely the predicted responses for the test data obtained by each method match the true response matrix $\bm{Y}$.
Hereafter, the MSE of the matrix of regression coefficients and the MSE of the predicted values are denoted by ${\rm MSE}_{\bm{B}}$ and ${\rm MSE}_{\bm{Y}}$, respectively, and their definitions are given as follows.
\begin{align*}
    {\rm{MSE}}_{{\bm{B}}}=\frac{1}{pq}\|(\hat{\bm{C}}\hat{\bm{D}}^{\mathsf{T}}-\bm{CD}^{\mathsf{T}})\|_F^2, 
\end{align*}
\begin{align*}
    {\rm{MSE}}_{{\bm{Y}}}=\frac{1}{n_{\rm test}q}\|\bm{X}_{test}\hat{\bm{C}}\hat{\bm{D}}^{\mathsf{T}}-\bm{X}_{test}\bm{CD}^{\mathsf{T}}\|_F^2,
\end{align*}
where $\bm{X}_{test} \in \mathbb{R}^{n_{\rm test} \times p}$ is the explanatory variable matrix for the test data. 
Next, TPR and TNR are metrics used to evaluate how accurately the explanatory variables are selected. 
TPR represents the proportion of rows in the matrix of regression coefficients that are nonzero and correctly estimated, whereas TNR represents the proportion of rows that are zero and correctly estimated.
\par{}
\quad Next, we describe the comparative methods used in the numerical simulations. The four comparative methods are Multivariate lasso (MLasso), which extends the lasso to multivariate linear regression, Multivariate elastic net (MElastic), which extends the elastic net to multivariate linear regression, SRRR, and ERRR.
Here, $\lambda$ and $\alpha$ denote hyperparameters, $\bm{B}_i \in \mathbb{R}^{1 \times q}$ denotes the $i$-th row vector of the matrix of regression coefficients $\bm{B}$. 
\begin{align*}
\intertext{(1) MLasso}
     \underset{\bm{B}} {\operatorname{min}}\ \  \frac{1}{2}{\|\bm{Y}-\bm{X}\bm{B}\|}^2_2+\lambda\sum^p_{i=1}\|\bm{B}_i\|_2 .
\end{align*}
\begin{align*}
\intertext{(2) MElastic}
     \underset{\bm{B}} {\operatorname{min}}\ \  \frac{1}{2}{\|\bm{Y}-\bm{X}\bm{B}\|}^2_2+\lambda\left(\alpha\sum^p_{i=1}\|\bm{B}_i\|_2+\frac{(1-\alpha)}{2}\|\bm{B}\|_2^2\right).
\end{align*}
\begin{align*}
\intertext{(3) SRRR}
     \underset{\bm{C},\ \bm{D}} {\operatorname{min}}\ \  \frac{1}{2}{\|\bm{Y}-\bm{X}\bm{C}\bm{D}^{\mathsf{T}} \|}^2_F+\lambda\sum^p_{i=1} \|\bm{C}_{i}\|_2 \qquad  {\rm{s.t.}} \qquad \bm{D}^{\mathsf{T}}\bm{D}=\bm{I}_r.
\end{align*}
\begin{align*}
\intertext{(4) ERRR}
     \underset{\bm{C},\ \bm{D}} {\operatorname{min}}\ \  \frac{1}{2}{\|\bm{Y}-\bm{X}\bm{C}\bm{D}^{\mathsf{T}} \|}^2_F+  \lambda\left(\alpha\sum^p_{i=1} \|\bm{C}_{i}\|_2+\frac{(1-\alpha)}{2}\|\bm{C}\|_2^2\right) \qquad
    {\rm{s.t.}} \qquad \bm{D}^{\mathsf{T}}\bm{D}=\bm{I}_r.
\end{align*}
\quad Finally, we describe the evaluation procedure for the numerical studies. For each scenario, training and test data are generated, and the hyperparameters ($\lambda, \alpha, \theta$) for each method are determined using 5-fold cross-validation on the training data. 
For the proposed method, SRRR, and ERRR, the rank $r$ is set to the true rank of 3. 
Each method is then applied to the test data using the determined hyperparameters, and the evaluation metrics for each method are calculated. 
For all 36 scenarios, data are randomly generated, and the calculation of evaluation metrics for each method is repeated 100 times to assess the performance of the proposed method.

\subsection{Simulation results}
%結果のところは最小値は全てboldにするかを決める
\quad Here, we present the results of the numerical simulations. 
First, we describe the results for each scenario in the case where the explanatory variables do not have a group structure. 
Next, we describe the results for each scenario in the case where the explanatory variables have a group structure.
\subsubsection{Results of the numerical simulations in the case without a group structure}
\quad Here, we describe the results of the numerical simulations for each scenario in the case where the explanatory variables do not have a group structure. When the number of explanatory variables is $p=200$, the numbers of nonzero rows are $p_0 = \{20, 40, 50\}$, and when the number of explanatory variables is $p=400$, the numbers of nonzero rows are $p_0 = \{40, 80, 100\}$.
\par{}
\quad First, we describe the results of the numerical simulations for the case without a group structure when $p = 200$. The results are summarized in Table 1.
The proposed method achieved the lowest ${\rm MSE}_{\bm{Y}}$ in all scenarios, demonstrating higher predictive accuracy compared to the comparative methods. 
Furthermore, as the proportion of nonzero rows increases, MLasso, MElastic, and ERRR tend to exhibit higher prediction errors, whereas the proposed method maintains robust estimation accuracy. 
Regarding the selection accuracy measured by TPR, the proposed method, MElastic, and ERRR showed high selection accuracy. 
With respect to the selection accuracy measured by TNR, the proposed method and SRRR generally demonstrated high accuracy, while MElastic and ERRR showed lower TNR, indicating a tendency toward over-selection. 
These results suggest that the proposed method achieves high selection accuracy in terms of both TPR and TNR, enabling stable variable selection.
\par{}
\quad Next, we describe the results of the numerical simulations for the case without a group structure when $p = 400$. The results are summarized in Table 2.
In the scenarios with $n, n_{\rm test} = 100, 500$, the proposed method achieved the lowest ${\rm MSE}_{\bm{Y}}$, demonstrating higher predictive accuracy than the comparative  methods even in high-dimensional settings. 
In contrast, in the scenario with $n, n_{\rm test} = 1000$, SRRR exhibited the highest predictive accuracy. 
Regarding selection accuracy measured by TPR, the proposed method, MElastic, and ERRR showed high values, successfully selecting the truly nonzero rows. 
With respect to TNR, the proposed method and SRRR demonstrated relatively high values, accurately selecting the truly zero rows. 
Overall, the proposed method detects truly nonzero variables with high accuracy while suppressing the false selection of unnecessary variables.
\subsubsection{Results of the numerical simulations in the case with a group structure}
\quad Here, we describe the results of each scenario in the numerical simulations for the case where the explanatory variables have a group structure.
When the number of explanatory variables is $p=200$, the numbers of nonzero rows in each group are $p_{k_0} = \{4, 8, 10\}$, whereas when the number of explanatory variables is $p=400$, the numbers of nonzero rows in each group are $p_{k_0} = \{8, 16, 20\}$.
\par{}
\quad First, we describe the results of the numerical simulations for the case with a group structure when $p=200$. The results are summarized in Table 3.
The proposed method achieved the lowest ${\rm MSE}_{\bm{Y}}$ across all scenarios, demonstrating high predictive accuracy. 
SRRR also showed relatively high predictive accuracy, but the difference in predictive accuracy between SRRR and the proposed method tended to increase as the proportion of nonzero rows increased. 
Regarding selection accuracy, MElastic and ERRR exhibited high TPR values, but many scenarios showed a decrease in TNR, indicating a tendency to over-select nonzero rows. 
In contrast, the proposed method demonstrated high selection accuracy in terms of both TPR and TNR, and was able to stably select variables, particularly for moderate to large sample sizes. 
These results indicate that the proposed method, which accounts for the group structure of the explanatory variables, can accurately select important variables while maintaining a low MSE.
\par{}
\quad Next, we describe the results of the numerical simulations for the case with a group structure when $p=400$. The results are summarized in Table 4.
In the scenarios with $n, n_{\rm test} = 100, 500$, the proposed method achieved the lowest ${\rm MSE}_{\bm{Y}}$, demonstrating consistently high predictive accuracy even in high-dimensional scenarios with a group structure. 
In the scenario with $n, n_{\rm test} = 1000$, SRRR achieved the smallest ${\rm MSE}_{\bm{Y}}$, but the difference from the proposed method was small, yielding comparable results. 
Regarding selection accuracy, the proposed method exhibited relatively high TPR, whereas TNR was higher for the comparative methods in many scenarios. 
These results indicate that even in scenarios with larger $p$, the proposed method, which accounts for the group structure of the explanatory variables, effectively leverages the group structure to improve predictive accuracy.
\newpage
\begin{table}[H]
\centering
\caption{Results of the numerical simulations for the case without a group structure when the number of explanatory variables is $p = 200$}
\resizebox{1.0\textwidth}{!}{
\begin{tabular}{|l|cccc|cccc|cccc|}
\hline
 & \multicolumn{4}{c|}{$n,n_{\rm test}=100,\ \tau=0.1$}
 & \multicolumn{4}{c|}{$n,n_{\rm test}=100,\ \tau=0.2$}
 & \multicolumn{4}{c|}{$n,n_{\rm test}=100,\ \tau=0.25$} \\ \hline
\textbf{\textit{Method}} & $\bm{\mathit{MSE}_{\bm{B}}}$ & $\bm{\mathit{MSE}_{\bm{Y}}}$ & \textbf{\textit{TPR}} & \textbf{\textit{TNR}} & $\bm{\mathit{MSE}_{\bm{B}}}$ & $\bm{\mathit{MSE}_{\bm{Y}}}$ & \textbf{\textit{TPR}} & \textbf{\textit{TNR}} & $\bm{\mathit{MSE}_{\bm{B}}}$ & $\bm{\mathit{MSE}_{\bm{Y}}}$ & \textbf{\textit{TPR}} & \textbf{\textit{TNR}}  \\
\hline
Proposed
 & \textbf{0.0004} & \textbf{0.1049} & 0.9433 & 0.9019
 & \textbf{0.0004} & \textbf{0.1174} & 0.9325 & 0.9204
 & \textbf{0.0005} & \textbf{0.1300} & 0.9027 & 0.9511 \\
MLasso
 & 0.0005 & 0.1842 & 0.9633 & 0.8274
 & 0.0007 & 0.2257 & 0.9208 & 0.9281
 & 0.0010 & 0.2362 & 0.8973 & 0.8516 \\
MElastic
 & 0.0007 & 0.2563 & \textbf{0.9917} & 0.7633
 & 0.0008 & 0.2526 & \textbf{0.9833} & 0.7815
 & 0.0009 & 0.2540 & \textbf{0.9853} & 0.7949 \\
SRRR
 & \textbf{0.0004} & 0.1111 & 0.9400 & \textbf{0.9056}
 & 0.0008 & 0.1521 & 0.8408 & \textbf{0.9538}
 & 0.0009 & 0.1744 & 0.7873 & \textbf{0.9613} \\
ERRR
 & 0.0006 & 0.2261 & \textbf{0.9917} & 0.8744
 & 0.0007 & 0.2243 & 0.9792 & 0.8921
 & 0.0009 & 0.2120 & 0.9807 & 0.7442 \\
\hline
  & \multicolumn{4}{c|}{$n,n_{\rm test}=500,\ \tau=0.1$}
 & \multicolumn{4}{c|}{$n,n_{\rm test}=500,\ \tau=0.2$}
 & \multicolumn{4}{c|}{$n,n_{\rm test}=500,\ \tau=0.25$} \\ \hline  
     \textbf{\textit{Method}} & $\bm{\mathit{MSE}_{\bm{B}}}$ & $\bm{\mathit{MSE}_{\bm{Y}}}$ & \textbf{\textit{TPR}} & \textbf{\textit{TNR}} & $\bm{\mathit{MSE}_{\bm{B}}}$ & $\bm{\mathit{MSE}_{\bm{Y}}}$ & \textbf{\textit{TPR}} & \textbf{\textit{TNR}} & $\bm{\mathit{MSE}_{\bm{B}}}$ & $\bm{\mathit{MSE}_{\bm{Y}}}$ & \textbf{\textit{TPR}} & \textbf{\textit{TNR}} 
     \\ \hline
    Proposed & \textbf{0.0001} &\textbf{0.0233} &0.9950& 0.7672 &\textbf{0.0001}& \textbf{0.0314} &0.9950& 0.7452&\textbf{0.0001} & \textbf{0.0404}& \textbf{1.0000} &0.5060\\
     MLasso & 0.0002 &0.0507 &0.9933 &0.6357 &0.0003 &0.0660 &0.9917 &0.6458 &
      0.0003 &0.0737 &0.9880 &0.6469\\
     MElastic & 0.0003& 0.0817 &\textbf{0.9967}& 0.4044  &0.0004& 0.0934 &\textbf{0.9958}& 0.4150&0.0004 &0.0975 &0.9953 &0.4164\\
         SRRR & \textbf{0.0001} &0.0268 &0.9783 &\textbf{0.8959}& 0.0002 &0.0447& 0.9833 &\textbf{0.8583}&0.0003 &0.0527& 0.9520& \textbf{0.9060}\\
     ERRR & 0.0002 &0.0581 &\textbf{0.9967} &0.6683 & 0.0003 &0.0708& 0.9942 &0.6775&0.0003 &0.0748 &0.9933 &0.6771\\
    \hline
       & \multicolumn{4}{c|}{$n,n_{\rm test}=1000,\ \tau=0.1$}
 & \multicolumn{4}{c|}{$n,n_{\rm test}=1000,\ \tau=0.2$}
 & \multicolumn{4}{c|}{$n,n_{\rm test}=1000,\ \tau=0.25$} \\ \hline 
     \textbf{\textit{Method}} & $\bm{\mathit{MSE}_{\bm{B}}}$ & $\bm{\mathit{MSE}_{\bm{Y}}}$ & \textbf{\textit{TPR}} & \textbf{\textit{TNR}} & $\bm{\mathit{MSE}_{\bm{B}}}$ & $\bm{\mathit{MSE}_{\bm{Y}}}$ & \textbf{\textit{TPR}} & \textbf{\textit{TNR}} & $\bm{\mathit{MSE}_{\bm{B}}}$ & $\bm{\mathit{MSE}_{\bm{Y}}}$ & \textbf{\textit{TPR}} & \textbf{\textit{TNR}} 
     \\ \hline
    Proposed & \textbf{0.0001}& \textbf{0.0148} &\textbf{0.9983}& 0.7137& \textbf{0.0001}& \textbf{0.0215}& 0.9942 &\textbf{0.9565}& \textbf{0.0001}& \textbf{0.0278} &0.9960 &\textbf{0.9396}\\
     MLasso & \textbf{0.0001} &0.0280 &0.9967 &0.7470& 0.0002 &0.0370 &0.9975 &0.6008&0.0002 &0.0422 &0.9933 &0.6062\\
     MElastic & 0.0002 &0.0480 &\textbf{0.9983}& 0.7330& 0.0002& 0.0527& \textbf{0.9983}& 0.7404&
     0.0002 &0.0562 &\textbf{0.9973} &0.7420\\
     SRRR & \textbf{0.0001} &0.0151 &0.9967 &0.8350&\textbf{0.0001} &0.0244& 0.9883& 0.8333&\textbf{0.0001} &0.0294& 0.9860 &0.7780\\
     ERRR & \textbf{0.0001} &0.0435 &\textbf{0.9983} &\textbf{0.8980}& 0.0002 &0.0472& 0.9942& 0.9050& 0.0002 &0.0509& 0.9953& 0.9040\\
     \hline
\end{tabular}}
\end{table}
\begin{table}[H]
\centering
\caption{Results of the numerical simulations for the case without a group structure when the number of explanatory variables is $p = 400$}
\resizebox{1.0\textwidth}{!}{
\begin{tabular}{|l|cccc|cccc|cccc|}
\hline
 & \multicolumn{4}{c|}{$n,n_{\rm test}=100,\ \tau=0.1$}
 & \multicolumn{4}{c|}{$n,n_{\rm test}=100,\ \tau=0.2$}
 & \multicolumn{4}{c|}{$n,n_{\rm test}=100,\ \tau=0.25$} \\ \hline
\textbf{\textit{Method}} & $\bm{\mathit{MSE}_{\bm{B}}}$ & $\bm{\mathit{MSE}_{\bm{Y}}}$ & \textbf{\textit{TPR}} & \textbf{\textit{TNR}} & $\bm{\mathit{MSE}_{\bm{B}}}$ & $\bm{\mathit{MSE}_{\bm{Y}}}$ & \textbf{\textit{TPR}} & \textbf{\textit{TNR}} & $\bm{\mathit{MSE}_{\bm{B}}}$ & $\bm{\mathit{MSE}_{\bm{Y}}}$ & \textbf{\textit{TPR}} & \textbf{\textit{TNR}}  \\
\hline
 Proposed & \textbf{0.0004}& \textbf{0.1741} &0.9283 &0.8256& 0.0008& \textbf{0.2347} &0.7800 &0.9534 &0.0012 &\textbf{0.2550} &0.7477 &0.9502\\
     MLasso & \textbf{0.0004} &0.2308 &0.9175 &\textbf{0.9431}& 0.0007& 0.2984& 0.8175 &\textbf{0.9574} &0.0009 &0.3084 &0.8163 &0.9177\\
     MElastic & \textbf{0.0004}& 0.2929& \textbf{0.9817}& 0.9407& 0.0006& 0.2850& \textbf{0.9604} &0.8722 &0.0006 &0.2808& \textbf{0.9660}& 0.8810\\
     SRRR & 0.0005 &0.2069 &0.8783 &0.8574&0.0007 &0.2328 &0.7458 &0.9257 
     &0.0010 &0.3290 &0.7580 &0.8330\\
     ERRR & \textbf{0.0004} &0.2298& 0.9783 &0.9169& \textbf{0.0005} &0.2465 &0.9529 &0.9384 & \textbf{0.0005} &0.2580& 0.9570 &\textbf{0.9719}\\
\hline
  & \multicolumn{4}{c|}{$n,n_{\rm test}=500,\ \tau=0.1$}
 & \multicolumn{4}{c|}{$n,n_{\rm test}=500,\ \tau=0.2$}
 & \multicolumn{4}{c|}{$n,n_{\rm test}=500,\ \tau=0.25$} \\ \hline  
      \textbf{\textit{Method}} & $\bm{\mathit{MSE}_{\bm{B}}}$ & $\bm{\mathit{MSE}_{\bm{Y}}}$ & \textbf{\textit{TPR}} & \textbf{\textit{TNR}} & $\bm{\mathit{MSE}_{\bm{B}}}$ & $\bm{\mathit{MSE}_{\bm{Y}}}$ & \textbf{\textit{TPR}} & \textbf{\textit{TNR}} & $\bm{\mathit{MSE}_{\bm{B}}}$ & $\bm{\mathit{MSE}_{\bm{Y}}}$ & \textbf{\textit{TPR}} & \textbf{\textit{TNR}} 
     \\ \hline
   Proposed & \textbf{0.0001} &\textbf{0.0475} &0.9842 &0.7949& \textbf{0.0001}& \textbf{0.0571}& \textbf{0.9950}& \textbf{0.8966} & \textbf{0.0001}& \textbf{0.0814} &\textbf{0.9953}& \textbf{0.9057}\\
     MLasso & 0.0002 &0.0774 &0.9900 &0.6906&0.0002& 0.1016 &0.9733& 0.7032
    & 0.0003 &0.1144 &0.9597 &0.6961\\
     MElastic &  0.0002& 0.0954 &\textbf{0.9925}& \textbf{0.8229}& 0.0002 &0.1060& 0.9850 &0.8291 & 0.0003 &0.1137 &0.9867 &0.8299\\
     SRRR & \textbf{0.0001} &0.0585& 0.9825 &0.7408&0.0002 &0.0792& 0.9462& 0.8021
     &0.0002 &0.0882 &0.9257& 0.8303\\
     ERRR & 0.0002 &0.0796& 0.9950 &0.7142&0.0002& 0.0930 &0.9871& 0.7196
     &0.0003& 0.1026& 0.9877 &0.7240\\
    \hline
       & \multicolumn{4}{c|}{$n,n_{\rm test}=1000,\ \tau=0.1$}
 & \multicolumn{4}{c|}{$n,n_{\rm test}=1000,\ \tau=0.2$}
 & \multicolumn{4}{c|}{$n,n_{\rm test}=1000,\ \tau=0.25$} \\ \hline 
     \textbf{\textit{Method}} & $\bm{\mathit{MSE}_{\bm{B}}}$ & $\bm{\mathit{MSE}_{\bm{Y}}}$ & \textbf{\textit{TPR}} & \textbf{\textit{TNR}} & $\bm{\mathit{MSE}_{\bm{B}}}$ & $\bm{\mathit{MSE}_{\bm{Y}}}$ & \textbf{\textit{TPR}} & \textbf{\textit{TNR}} & $\bm{\mathit{MSE}_{\bm{B}}}$ & $\bm{\mathit{MSE}_{\bm{Y}}}$ & \textbf{\textit{TPR}} & \textbf{\textit{TNR}} 
     \\ \hline
    Proposed & \textbf{0.0001} &0.0305 &0.9917& 0.8889& \textbf{0.0001} &0.0562 &\textbf{0.9983} &0.8660&\textbf{0.0001} &0.0687& \textbf{0.9987} &0.6613\\
     MLasso & \textbf{0.0001} &0.0399 &0.9883& 0.9258&  \textbf{0.0001} &0.0590& 0.9850 &\textbf{0.9295}&0.0002 &0.0680 &0.9763 &  \textbf{0.9301}\\
     MElastic & \textbf{0.0001}& 0.0724& 0.9925& \textbf{0.9837}& 0.0002 &0.0689 &0.9958 &0.7594& 0.0002 &0.0738 &0.9923& 0.7576\\
     SRRR &  \textbf{0.0001} &\textbf{0.0254}& 0.9858 &0.8894& \textbf{0.0001}& \textbf{0.0436}& 0.9767& 0.7958&\textbf{0.0001}& \textbf{0.0507} &0.9697 &0.8126\\
     ERRR & \textbf{0.0001} &0.0480& \textbf{0.9933} &0.9080 &  \textbf{0.0001}&0.0606& 0.9942 &0.9100&0.0002 &0.0662& 0.9900 &0.9124\\
     \hline
\end{tabular}}
\end{table}
\begin{table}[H]
\centering
\caption{Results of the numerical simulations for the case with a group structure when the number of explanatory variables is $p = 200$}
\resizebox{1.0\textwidth}{!}{
\begin{tabular}{|l|cccc|cccc|cccc|}
\hline
 & \multicolumn{4}{c|}{$n,n_{\rm test}=100,\ \tau=0.1$}
 & \multicolumn{4}{c|}{$n,n_{\rm test}=100,\ \tau=0.2$}
 & \multicolumn{4}{c|}{$n,n_{\rm test}=100,\ \tau=0.25$} \\ \hline
 \textbf{\textit{Method}} & $\bm{\mathit{MSE}_{\bm{B}}}$ & $\bm{\mathit{MSE}_{\bm{Y}}}$ & \textbf{\textit{TPR}} & \textbf{\textit{TNR}} & $\bm{\mathit{MSE}_{\bm{B}}}$ & $\bm{\mathit{MSE}_{\bm{Y}}}$ & \textbf{\textit{TPR}} & \textbf{\textit{TNR}} & $\bm{\mathit{MSE}_{\bm{B}}}$ & $\bm{\mathit{MSE}_{\bm{Y}}}$ & \textbf{\textit{TPR}} & \textbf{\textit{TNR}}  \\
\hline
  Proposed & \textbf{0.0004}& \textbf{0.1001} &0.9410 & 0.9042 &\textbf{0.0004}& \textbf{0.1170} &0.9348& 0.9174 &\textbf{0.0005}& \textbf{0.1290}& 0.9078 & 0.9449\\
     MLasso & 0.0005 & 0.1785 & 0.9630 & 0.8285 & 0.0007 &0.2284 &0.9135 &0.9302
     &0.0010& 0.2293 & 0.8972 &0.8480 \\
     MElastic & 0.0006& 0.2657 &\textbf{0.9905} &0.8505& 0.00010 & 0.2500 & \textbf{0.9880} &0.6802 &0.0009& 0.2475 & \textbf{0.9844}& 0.7904\\
     SRRR & \textbf{0.0004} &0.1067 &0.9305 &\textbf{0.9072} &0.0007 &0.1486 &0.8340 &\textbf{0.9571} &0.0009& 0.1685 &0.7942 &\textbf{0.9627}\\
     ERRR & 0.0006 &0.1998 &\textbf{0.9905}& 0.8083& 0.0007 &0.2207 &0.9822 &0.8939
    & 0.0011 &0.2277 & 0.9820 &0.6226\\
\hline
  & \multicolumn{4}{c|}{$n,n_{\rm test}=500,\ \tau=0.1$}
 & \multicolumn{4}{c|}{$n,n_{\rm test}=500,\ \tau=0.2$}
 & \multicolumn{4}{c|}{$n,n_{\rm test}=500,\ \tau=0.25$} \\ \hline  
      \textbf{\textit{Method}} & $\bm{\mathit{MSE}_{\bm{B}}}$ & $\bm{\mathit{MSE}_{\bm{Y}}}$ & \textbf{\textit{TPR}} & \textbf{\textit{TNR}} & $\bm{\mathit{MSE}_{\bm{B}}}$ & $\bm{\mathit{MSE}_{\bm{Y}}}$ & \textbf{\textit{TPR}} & \textbf{\textit{TNR}} & $\bm{\mathit{MSE}_{\bm{B}}}$ & $\bm{\mathit{MSE}_{\bm{Y}}}$ & \textbf{\textit{TPR}} & \textbf{\textit{TNR}} 
     \\ \hline
    Proposed & \textbf{0.0001} &\textbf{0.0241} &0.9970& 0.7720 &\textbf{0.0001}& \textbf{0.0316} &0.9970& 0.7431 &\textbf{0.0001} & \textbf{0.0398}& 0.9900 &0.5116\\
     MLasso & 0.0002 &0.0509 &0.9965 &0.6405 &0.0003 &0.0656 &0.9900 &0.6469 &
    0.0003 &0.0788 &0.9734 &\textbf{0.9560} \\
     MElastic & 0.0002 & 0.0799 &0.9975 & 0.7933  &0.0003 & 0.0893 &0.9932& 0.8060 &0.0004 &0.0953 &\textbf{0.9956} &0.4134\\
     SRRR & \textbf{0.0001} &0.0275 &0.9870 &\textbf{0.8968}& 0.0002 &0.0446 & 0.9730 &\textbf{0.8657}&0.0003 &0.0528& 0.9532& 0.9143\\
     ERRR & 0.0002 &0.0588 &\textbf{0.9985} &0.6720 & 0.0003 &0.0685 & \textbf{0.9952} &0.6776&0.0003 &0.0862 &0.9904 &0.9302\\
    \hline
       & \multicolumn{4}{c|}{$n,n_{\rm test}=1000,\ \tau=0.1$}
 & \multicolumn{4}{c|}{$n,n_{\rm test}=1000,\ \tau=0.2$}
 & \multicolumn{4}{c|}{$n,n_{\rm test}=1000,\ \tau=0.25$} \\ \hline 
     \textbf{\textit{Method}} & $\bm{\mathit{MSE}_{\bm{B}}}$ & $\bm{\mathit{MSE}_{\bm{Y}}}$ & \textbf{\textit{TPR}} & \textbf{\textit{TNR}} & $\bm{\mathit{MSE}_{\bm{B}}}$ & $\bm{\mathit{MSE}_{\bm{Y}}}$ & \textbf{\textit{TPR}} & \textbf{\textit{TNR}} & $\bm{\mathit{MSE}_{\bm{B}}}$ & $\bm{\mathit{MSE}_{\bm{Y}}}$ & \textbf{\textit{TPR}} & \textbf{\textit{TNR}} 
     \\ \hline
   Proposed & \textbf{0.0001}& \textbf{0.0147} &0.9975& 0.7141& \textbf{0.0001}& \textbf{0.0210}& 0.9962 &\textbf{0.9563}& \textbf{0.0001}& \textbf{0.0282} &0.9964 &\textbf{0.9383}\\
     MLasso & \textbf{0.0001} &0.0290 &0.9955 &\textbf{0.9272} & 0.0001 &0.0392 &0.9912 &0.9291 &0.0002 &0.0448 &0.9914 &0.9271\\
     MElastic & 0.0002 &0.0481 &0.9990 & 0.7376& 0.0002& 0.0553 & 0.9978& 0.7419 &0.0002 &0.0593 &\textbf{0.9970} &0.7442\\
     SRRR & \textbf{0.0001} &0.0151 &0.9965 &0.8354 &\textbf{0.0001} &0.0250 & 0.9905& 0.8229&\textbf{0.0001} &0.0297 & 0.9882 &0.7760\\
     ERRR & \textbf{0.0002} &0.0377 &\textbf{0.9995} &0.3763& 0.0002 &0.0440 & \textbf{0.9988}& 0.3742& 0.0002 &0.0539& 0.9958& 0.9026\\
     \hline
\end{tabular}}
\end{table}
\begin{table}[H]
\centering
\caption{Results of the numerical simulations for the case with a group structure when the number of explanatory variables is $p = 400$}
\resizebox{1.0\textwidth}{!}{
\begin{tabular}{|l|cccc|cccc|cccc|}
\hline
 & \multicolumn{4}{c|}{$n,n_{\rm test}=100,\ \tau=0.1$}
 & \multicolumn{4}{c|}{$n,n_{\rm test}=100,\ \tau=0.2$}
 & \multicolumn{4}{c|}{$n,n_{\rm test}=100,\ \tau=0.25$} \\ \hline
 \textbf{\textit{Method}} & $\bm{\mathit{MSE}_{\bm{B}}}$ & $\bm{\mathit{MSE}_{\bm{Y}}}$ & \textbf{\textit{TPR}} & \textbf{\textit{TNR}} & $\bm{\mathit{MSE}_{\bm{B}}}$ & $\bm{\mathit{MSE}_{\bm{Y}}}$ & \textbf{\textit{TPR}} & \textbf{\textit{TNR}} & $\bm{\mathit{MSE}_{\bm{B}}}$ & $\bm{\mathit{MSE}_{\bm{Y}}}$ & \textbf{\textit{TPR}} & \textbf{\textit{TNR}} \\\hline
  Proposed & 0.0006& \textbf{0.1757} &0.9332 &0.8146& 0.0009& \textbf{0.2347} &0.7965 &0.9478 &0.0012 &\textbf{0.2568} &0.7569 &0.9471\\
     MLasso & \textbf{0.0004} &0.2279 &0.9092 &\textbf{0.9433}& 0.0007& 0.2915& 0.8222 &\textbf{0.9576} &0.0009 &0.3071 &0.8124 &0.9152\\
     MElastic & \textbf{0.0004}& 0.2691 & \textbf{0.9848}& 0.8976& 0.0006& 0.2789& \textbf{0.9682} &0.8705 &0.0006 &0.2811& \textbf{0.9666}& 0.8791\\
     SRRR & 0.0005 &0.2002 &0.8702 &0.8579 &0.0007 &0.2256 &0.7543 &0.9274
     &0.0009 &0.3212 &0.7622 &0.8364\\
     ERRR & \textbf{0.0004} &0.2261 & 0.9828 &0.9172& \textbf{0.0005} &0.2427 &0.9615 &0.9397 & \textbf{0.0005} &0.2624 & 0.9582 &\textbf{0.9718}\\
\hline
  & \multicolumn{4}{c|}{$n,n_{\rm test}=500,\ \tau=0.1$}
 & \multicolumn{4}{c|}{$n,n_{\rm test}=500,\ \tau=0.2$}
 & \multicolumn{4}{c|}{$n,n_{\rm test}=500,\ \tau=0.25$} \\ \hline  
     \textbf{\textit{Method}} & $\bm{\mathit{MSE}_{\bm{B}}}$ & $\bm{\mathit{MSE}_{\bm{Y}}}$ & \textbf{\textit{TPR}} & \textbf{\textit{TNR}} & $\bm{\mathit{MSE}_{\bm{B}}}$ & $\bm{\mathit{MSE}_{\bm{Y}}}$ & \textbf{\textit{TPR}} & \textbf{\textit{TNR}} & $\bm{\mathit{MSE}_{\bm{B}}}$ & $\bm{\mathit{MSE}_{\bm{Y}}}$ & \textbf{\textit{TPR}} & \textbf{\textit{TNR}} 
     \\ \hline
   Proposed & \textbf{0.0001} &\textbf{0.0518} &0.9860 &0.7938& \textbf{0.0001}& \textbf{0.0588}& \textbf{0.9959}& 0.8987 & \textbf{0.0001}& \textbf{0.0802} &\textbf{0.9960}& \textbf{0.9125}\\
     MLasso & 0.0001 &0.0717 &0.9805 &\textbf{0.9582} &0.0002& 0.0992 &0.9598& \textbf{0.9616}
    & 0.0003 &0.1117 &0.9627 &0.7035\\
     MElastic &  0.0002& 0.0932 &0.9935& 0.8229& 0.0002 &0.1060& 0.9880 &0.8354 & 0.0003 &0.1135 &0.9856 &0.8392\\
     SRRR & \textbf{0.0001} &0.0577 & 0.9805 &0.7431 &0.0002 &0.0782& 0.9486 & 0.8047
     &0.0002 &0.0872 &0.9843& 0.8392\\
     ERRR & 0.0002 &0.0774 & \textbf{0.9952} &0.7143&0.0002& 0.0916 &0.9884 & 0.7232
     &0.0003& 0.1005 & 0.9843 &0.7262\\
    \hline
       & \multicolumn{4}{c|}{$n,n_{\rm test}=1000,\ \tau=0.1$}
 & \multicolumn{4}{c|}{$n,n_{\rm test}=1000,\ \tau=0.2$}
 & \multicolumn{4}{c|}{$n,n_{\rm test}=1000,\ \tau=0.25$} \\ \hline 
     \textbf{\textit{Method}} & $\bm{\mathit{MSE}_{\bm{B}}}$ & $\bm{\mathit{MSE}_{\bm{Y}}}$ & \textbf{\textit{TPR}} & \textbf{\textit{TNR}} & $\bm{\mathit{MSE}_{\bm{B}}}$ & $\bm{\mathit{MSE}_{\bm{Y}}}$ & \textbf{\textit{TPR}} & \textbf{\textit{TNR}} & $\bm{\mathit{MSE}_{\bm{B}}}$ & $\bm{\mathit{MSE}_{\bm{Y}}}$ & \textbf{\textit{TPR}} & \textbf{\textit{TNR}} 
     \\ \hline
    Proposed & \textbf{0.0001} &0.0315 &0.9930 & 0.8866& \textbf{0.0001} &0.0565 &\textbf{0.9986} &0.8668 &\textbf{0.0001} &0.0706 & \textbf{0.9990} &0.6647\\
     MLasso & \textbf{0.0001} &0.0399 &0.9910 & 0.9290 &  \textbf{0.0001} &0.0597& 0.9814 &\textbf{0.9317}&0.0002 &0.0686 &0.9763 &  \textbf{0.9322}\\
     MElastic & \textbf{0.0001}& 0.0759 & 0.9950 & \textbf{0.9835}& 0.0002 &0.0720 &0.9948 &0.7634& 0.0002 &0.0772 &0.9925& 0.7651\\
     SRRR &  \textbf{0.0001} &\textbf{0.0261}& 0.9882 &0.8882& \textbf{0.0001}& \textbf{0.0444}& 0.9761& 0.7992&\textbf{0.0001}& \textbf{0.0511} &0.9675 &0.8134\\
     ERRR & \textbf{0.0001} &0.0511& \textbf{0.9960} &0.9046 &  \textbf{0.0001}&0.0638& 0.9926 &0.9080&0.0002 &0.0696& 0.9897 &0.9097\\
     \hline
\end{tabular}}
\end{table}

\section{Real data application}
\quad In this Section, we applied the proposed method to a genetic dataset named "chin07" from package lol in R software \citep{chin2007high} in order to verify its usefulness.
The chin07 dataset consists of gene expression data from breast cancer tumor samples.
Specifically, it comprises mRNA expression data for important breast cancer genes and DNA copy number data for genomic regions containing candidate genes related to breast cancer.
The mRNA expression data consist of expression levels for seven gene regions mapped to key breast cancer genes across 106 samples.
On the other hand, the DNA copy number data consist of 339 genomic regions, constructed by aggregating adjacent DNA sequences using CGHregions, for the same 106 samples.
In addition, the DNA copy number data include 22 groups that classify these genomic regions.
\par{}
\quad The objective of the analysis using this dataset is to identify which genomic regions exhibit DNA copy number variations that influence mRNA expression levels and to what extent.
We perform regression analysis using the mRNA expression data as response variables and the DNA copy number data as explanatory variables.
\par{}
\quad Next, we present the validation procedure for the real data application.
The evaluation consists of the following 4 Steps.
In Step 1, the real dataset is divided into training data and test data.
Specifically, 80 samples are randomly selected from the dataset as the training data, and the remaining 26 samples are used as the test data.
In Step 2, for the training data, hyperparameters of the proposed method and comparative  methods that account for group structure are selected using 5-fold cross-validation.
In Step 3, each method with the selected hyperparameters is applied to the training data to estimate the matrix of regression coefficients.
In Step 4, predicted values of the response variables are obtained using the test data, and the mean squared prediction error (MSPE) between the predicted and true values is calculated.
Step 1 through Step 4 are repeated 100 times, and the mean of MSPE (Mean) and the standard error (SE) are computed to evaluate the performance of the proposed method.
\par{}
\quad Finally, we present the results of the real data application for the proposed method and the comparative methods.
The results are summarized in Table 5.
The proposed method achieved the smallest Mean, indicating higher predictive accuracy compared to the comparative methods.
This suggests that the proposed method more accurately captures the effects of DNA copy number variations on mRNA expression levels.
In addition, the proposed method exhibited smaller SE values, demonstrating relatively stable predictive accuracy.
These results indicate that the proposed method outperforms the comparative  methods in terms of both predictive accuracy and estimation stability.
In other words, by effectively capturing the association structure between DNA copy number variations and mRNA expression levels, the proposed method is shown to perform well in real data applications.
\begin{table*}[h]
\centering
\caption{Results of the real data application}
\begin{tabular*}{250pt}{@{\extracolsep\fill}|l|cc|@{\extracolsep\fill}}%
 \hline
 \textbf{\textit{Method}} & $\bm{\mathit{Mean}}$ & $\bm{\mathit{SE}}$ \\
 \hline
Proposed & \textbf{0.794} & \textbf{0.114}\\
MLasso & 0.838 & 0.119\\
MElastic & 0.865 & 0.122\\
SRRR & 0.813 & 0.125\\
ERRR & 0.847 & \textbf{0.114}\\
 \hline
\end{tabular*}
\end{table*}

\section{Conclusions}\label{sec5}
\quad In this paper, we proposed a novel method that incorporates pcLasso into the objective function of reduced-rank regression. 
The proposed method estimates regression coefficients by accounting for correlations among response variables while strongly biasing them toward principal component directions with large variance, and we confirmed that this leads to improved predictive accuracy. 
Furthermore, when explanatory variables exhibit a group structure, the proposed method was shown to further improve predictive accuracy by incorporating within-group correlation information.
\par{}
\quad The results of numerical simulations demonstrated that the proposed method generally outperformed the comparative  methods, and in particular, it successfully maintained predictive accuracy in scenarios with small sample sizes. 
For the comparative  methods, as the sample size decreases, the regression coefficients cannot be accurately estimated, resulting in a failure to maintain predictive accuracy.
This can be attributed to the fact that the proposed method introduces a regularization term that accounts for the correlation structure among explanatory variables and biases regression coefficients toward high-variance principal component directions. 
As a result, regression coefficients corresponding to information-rich principal components are emphasized, effectively suppressing estimation variance that tends to become unstable under small sample sizes, thereby maintaining predictive accuracy. 
Moreover, even in scenarios where explanatory variables have a group structure and each group exhibits a low-rank structure, the proposed method appropriately captured within-group correlation structures and biased regression coefficients along the corresponding principal component directions, achieving higher predictive accuracy than the comparative  methods.
However, for the accuracy of variable selection measured by the true negative rate (TNR), there were scenarios in which the comparative  methods outperformed the proposed method, indicating a remaining challenge. 
This is because the proposed method applies regularization that biases regression coefficients toward principal component directions with large variance while accounting for correlations among explanatory variables. 
Consequently, variables that do not truly contribute to predictive accuracy may still be selected if they are highly correlated with relevant variables. 
Indeed, even in numerical simulations on pcLasso conducted within the framework of a univariate linear regression model, the selection accuracy in terms of TNR is considerably lower than that of the true positive TPR \citep{tay2021principal}.
Compared with comparative  methods that more strongly enforce sparsity, the proposed method may therefore be less effective at removing irrelevant variables, leading to inferior TNR in some scenarios.
\par{}
\quad In the real data application, the proposed method was shown to more accurately capture the association structure between DNA copy number variations and mRNA expression levels compared with the comparative  methods. 
In addition, the proposed method achieved smaller values of ${\rm SE}$, indicating more stable predictive accuracy. 
These results suggest that it can be effectively applied to real-world data.
\par{}
\quad There are three issues to be addressed in the future research. 
First, in the numerical simulations, the rank $r$ of the matrix of regression coefficients, which is a tuning parameter, is fixed at three. It is necessary to examine how the predictive accuracy varies with different choices of the rank. In the simulation studies, each method was estimated assuming that the true rank of the matrix of regression coefficients was known; however, in practice, an appropriate rank should be selected in a data-driven manner, for example, via cross-validation. In particular, it is an important future task to more flexibly evaluate the predictive accuracy of the proposed method when the rank increases or when the true rank is misspecified. 
Regarding the choice of the rank, within the framework of SRRR, an appropriate rank is selected using a scree plot of cross-validation errors \citep{chen2012sparse}. 
Such criteria for selecting the number of latent variables can also be investigated within the proposed method.
Second, the proposed method currently assumes a non-overlapping group structure, in which each explanatory variable belongs to exactly one group. However, in real data analyses, such as gene expression studies, a single gene may belong to multiple biological pathways or functional groups, resulting in overlapping group structures. pcLasso addresses this situation by incorporating an overlapping group lasso penalty \citep{jacob2009group}. 
Extending the proposed method to allow overlapping group structures would enable analyses that better reflect realistic data settings and represents an important direction for future research.
Third, there remains an issue regarding the variable selection performance of the proposed method. 
Although the numerical simulation results demonstrate that the proposed method achieves higher predictive accuracy than the comparative methods, there are scenarios in which its selection accuracy can be further improved.

\bibliographystyle{apalike}  
\bibliography{references}  %%% Remove comment to use the external .bib file (using bibtex).

@article{liu2022robust,
  title={Robust sparse reduced-rank regression with response dependency},
  author={Liu, Wenchen and Liu, Guanfu and Tang, Yincai},
  journal={Symmetry},
  volume={14},
  number={8},
  pages={1617},
  year={2022},
  publisher={MDPI}
}

@article{friedman2007pathwise,
  title={Pathwise coordinate optimization},
  author={Friedman, Jerome and Hastie, Trevor and H{\"o}fling, Holger and Tibshirani, Robert},
  volume={1},
  pages={302--332},
  journal={The Annals of Applied Statistics},
  year={2007}
}

@article{hilafu2020sparse,
  title={Sparse reduced-rank regression for integrating omics data},
  author={Hilafu, Haileab and Safo, Sandra E and Haine, Lillian},
  journal={BMC Bioinformatics},
  volume={21},
  number={1},
  pages={283},
  year={2020},
  publisher={Springer}
}

@article{chen2012sparse,
  title={Sparse reduced-rank regression for simultaneous dimension reduction and variable selection},
  author={Chen, Lisha and Huang, Jianhua Z},
  journal={Journal of the American Statistical Association},
  volume={107},
  number={500},
  pages={1533--1545},
  year={2012},
  publisher={Taylor \& Francis}
}

@article{tay2021principal,
  title={Principal component-guided sparse regression},
  author={Tay, Jingyi K and Friedman, Jerome and Tibshirani, Robert},
  journal={Canadian Journal of Statistics},
  volume={49},
  number={4},
  pages={1222--1257},
  year={2021},
  publisher={Wiley Online Library}
}

@article{chin2007high,
  title={High-resolution aCGH and expression profiling identifies a novel genomic subtype of ER negative breast cancer},
  author={Chin, Suet F and Teschendorff, Andrew E and Marioni, John C and Wang, Yanzhong and Barbosa-Morais, Nuno L and Thorne, Natalie P and Costa, Jose L and Pinder, Sarah E and Van de Wiel, Mark A and Green, Andrew R and others},
  journal={Genome Biology},
  volume={8},
  number={10},
  pages={R215},
  year={2007},
  publisher={Springer}
}

@inproceedings{ogutu2014regularized,
  title={Regularized group regression methods for genomic prediction: Bridge, MCP, SCAD, group bridge, group lasso, sparse group lasso, group MCP and group SCAD},
  author={Ogutu, Joseph O and Piepho, Hans-Peter},
  booktitle={BMC Proceedings},
  volume={8},
  number={Suppl 5},
  pages={S7},
  year={2014},
  organization={Springer}
}

@book{adachi2016matrix,
  title={Matrix-Based Introduction to Multivariate Data Analysis},
  author={Adachi, Kohei},
  year={2016},
  publisher={Springer}
}

@article{anderson1951estimating,
  title={Estimating linear restrictions on regression coefficients for multivariate normal distributions},
  author={Anderson, Theodore Wilbur},
  journal={The Annals of Mathematical Statistics},
  pages={327--351},
  year={1951},
  publisher={JSTOR}
}

@article{izenman1975reduced,
  title={Reduced-rank regression for the multivariate linear model},
  author={Izenman, Alan Julian},
  journal={Journal of Multivariate Analysis},
  volume={5},
  number={2},
  pages={248--264},
  year={1975},
  publisher={Elsevier}
}

@book{izenman2008modern,
  title={Modern Multivariate Statistical Techniques},
  author={Izenman, Alan J},
  volume={1},
  year={2008},
  publisher={Springer}
}

@book{reinsel1998multivariate,
  title={Multivariate Reduced-Rank Regression},
  author={Reinsel, Gregory C and Velu, Raja P},
  year={1998},
  publisher={Springer}
}

@article{yuan2006model,
  title={Model selection and estimation in regression with grouped variables},
  author={Yuan, Ming and Lin, Yi},
  journal={Journal of the Royal Statistical Society Series B: Statistical Methodology},
  volume={68},
  number={1},
  pages={49--67},
  year={2006},
  publisher={Oxford University Press}
}

@article{kobak2021sparse,
  title={Sparse reduced-rank regression for exploratory visualisation of paired multivariate data},
  author={Kobak, Dmitry and Bernaerts, Yves and Weis, Marissa A and Scala, Federico and Tolias, Andreas S and Berens, Philipp},
  journal={Journal of the Royal Statistical Society Series C: Applied Statistics},
  volume={70},
  number={4},
  pages={980--1000},
  year={2021},
  publisher={Oxford University Press}
}

@article{mukherjee2011reduced,
  title={Reduced rank ridge regression and its kernel extensions},
  author={Mukherjee, Ashin and Zhu, Ji},
  journal={Statistical Analysis and Data Mining: the ASA Data Science Journal},
  volume={4},
  number={6},
  pages={612--622},
  year={2011},
  publisher={Wiley Online Library}
}

@article{zou2005regularization,
  title={Regularization and variable selection via the elastic net},
  author={Zou, Hui and Hastie, Trevor},
  journal={Journal of the Royal Statistical Society Series B: Statistical Methodology},
  volume={67},
  number={2},
  pages={301--320},
  year={2005},
  publisher={Oxford University Press}
}

@article{chen2010graph,
  title={Graph-structured multi-task regression and an efficient optimization method for general fused lasso},
  author={Chen, Xi and Kim, Seyoung and Lin, Qihang and Carbonell, Jaime G and Xing, Eric P},
  journal={arXiv preprint arXiv:1005.3579},
  year={2010}
}

@book{gower2004procrustes,
  title={Procrustes Problems},
  author={Gower, John C and Dijksterhuis, Garmt B},
  volume={30},
  year={2004},
  publisher={Oxford university press}
}

@article{carroll1970analysis,
  title={Analysis of individual differences in multidimensional scaling via an N-way generalization of “Eckart-Young” decomposition},
  author={Carroll, J Douglas and Chang, Jih-Jie},
  journal={Psychometrika},
  volume={35},
  number={3},
  pages={283--319},
  year={1970},
  publisher={Springer-Verlag}
}

@article{harshman1970foundations,
  title={Foundations of the PARAFAC procedure: Models and conditions for an “explanatory” multi-modal factor analysis},
  author={Harshman, Richard A and others},
  journal={UCLA working papers in phonetics},
  volume={16},
  number={1},
  pages={84},
  year={1970},
  publisher={Los Angeles, CA}
}

@inproceedings{jacob2009group,
  title={Group lasso with overlap and graph lasso},
  author={Jacob, Laurent and Obozinski, Guillaume and Vert, Jean-Philippe},
  booktitle={Proceedings of the 26th Annual International Conference on Machine Learning},
  pages={433--440},
  year={2009}
}
%%% and comment out the ``thebibliography'' section.

\end{document}